\documentclass[%
 reprint,
 superscriptaddress,
 nofootinbib,
 nobibnotes,
 amsmath,amssymb,
 aps,
 prl,
 floatfix,
]{revtex4-2}

\usepackage{graphicx}
\usepackage{dcolumn}
\usepackage{bm}
\usepackage[colorlinks,linkcolor=blue,urlcolor=blue,citecolor=blue]{hyperref}
\usepackage{lipsum}
\usepackage{dsfont}
\usepackage{physics}
\usepackage{enumitem}
\usepackage{tcolorbox}
\usepackage{needspace}
\usepackage{setspace}

\usepackage{amsthm} 
\usepackage{thmtools} 
\usepackage{thm-restate}

\declaretheorem[numberwithin=section]{theorem}
\declaretheorem[numberwithin=section]{lemma}

\declaretheorem[numberwithin=section]{definition}

\renewcommand{\thetheorem}{\thesection.\arabic{theorem}}
\renewcommand{\thelemma}{\thesection.\arabic{lemma}}
\renewcommand{\thecorollary}{\thesection.\arabic{corollary}}
\renewcommand{\thedefinition}{\thesection.\arabic{definition}}
\renewcommand{\theproposition}{\thesection.\arabic{proposition}}

\usepackage[page]{appendix}

\renewcommand{\appendixtocname}{List of appendices}

\makeatletter
\let\oldappendix\appendices

\g@addto@macro\tableofcontents{%
  \let\tf@toc@orig\tf@toc
}

\renewcommand{\appendices}{%
  \renewcommand{\thesection}{\Alph{section}}%

  \setcounter{section}{0}%

  \renewcommand{\thetheorem}{\thesection.\arabic{theorem}}%
  \renewcommand{\thelemma}{\thesection.\arabic{lemma}}%
  \renewcommand{\thecorollary}{\thesection.\arabic{corollary}}%
  \renewcommand{\theproposition}{\thesection.\arabic{proposition}}%
  \renewcommand{\thedefinition}{\thesection.\arabic{definition}}%

  \let\tf@toc\tf@app
  \addtocontents{app}{\protect\setcounter{tocdepth}{1}}%
  \immediate\write\@auxout{%
    \string\let\string\tf@toc\string\tf@app
  }%
  \oldappendix
}%

\g@addto@macro\endappendices{%
  \let\tf@toc\tf@toc@orig
  \immediate\write\@auxout{%
    \string\let\string\tf@toc\string\tf@toc@orig
  }%
}

\renewcommand\tableofcontents{%
    \@starttoc{toc}%
}

\newcommand{\listofappendices}{%
  \begingroup
  \newcommand{\contentsname}{\appendixtocname}
  \let\@oldstarttoc\@starttoc
  \def\@starttoc##1{\@oldstarttoc{app}}
  \tableofcontents
  \endgroup
}

\makeatother

\begin{document}

\preprint{APS/123-QED}

\title{Thermodynamic Proof of Quantumness without Structure}

\author{Florian Meier}
\email[]{florianmeier256@gmail.com}
\affiliation{Institut für theoretische Physik, Technische Universit{\"a}t Wien, 1040 Vienna, Austria}
\affiliation{Atominstitut, Technische Universit{\"a}t Wien, 1020 Vienna, Austria}
\author{Hayata Yamasaki}
\email[]{hayata.yamasaki@gmail.com}
\affiliation{Department of Computer Science, Graduate School of Information Science and Technology, The University of Tokyo, 7-3-1 Hongo, Bunkyo-ku, Tokyo 113-8656, Japan}

\begin{abstract}
Demonstrating that a machine performs genuinely quantum operations is a central challenge in quantum information processing.
Existing proofs of quantumness typically rely on computational tasks that are infeasible for classical machines under assumptions such as computational hardness, or explicit bounds on classical runtime, memory, or oracle queries.
Here we introduce an alternative paradigm: a thermodynamic proof of quantumness based on energy consumption.
Using the search problem of Yamakawa and Zhandry with a structureless uniformly random oracle, we show that any classical machine solving the task with constant success probability must generate entropy that grows exponentially with the security parameter.
By Landauer's principle, this implies an exponential lower bound on heat dissipation and hence on the energy consumption of any cyclic classical implementation.
In contrast, the corresponding quantum prover solves the task with polynomial query and gate complexity, giving polynomial energy consumption under standard assumptions for quantum implementations.
Thus, in the random-oracle setting, sufficiently low energy consumption in a successful implementation certifies quantumness through a macroscopic thermodynamic observation of energy consumption.
\end{abstract}

\maketitle

\textit{Introduction.}---%
Quantum computation can outperform classical computation in several inequivalent resource measures, including time complexity, query complexity, and communication complexity~\cite{Watrous2009,Wolf2019}. 
These advantages capture different aspects of computation and do not generally imply one another. 
For example, exponential advantages of quantum computation over classical computation are widely expected in time complexity and can be proved unconditionally in query complexity for oracle problems, whereas quantum advantages in space complexity are at most polynomial in standard settings~\cite{watrous2003complexity}. 
Identifying which resources can exhibit genuine quantum advantages, and under which assumptions, is therefore a fundamental question at the interface of quantum computation and physics.

Many exponential quantum advantages appear to require particular mathematical structures in the computational task. 
Canonical examples include period finding in Shor's algorithm~\cite{Shor1997} and Simon's problem~\cite{Simon1997}, where the quantum speedup is tied to hidden algebraic structure. 
By contrast, unstructured search admits only the quadratic speedup of Grover's algorithm~\cite{Grover1996,Bennett1997}, which is often considered insufficient for practical quantum advantage once the overheads of fault-tolerant quantum computation are included at finite scales~\cite{Babbush2021}. 
This has supported the intuition that substantial quantum advantages require highly structured problems~\cite{Aaronson2014,Aaronson2022}. 
Yamakawa and Zhandry recently challenged this intuition by constructing a search problem relative to a uniformly random oracle that admits an exponential quantum advantage in query complexity~\cite{Yamakawa2024}. 
Their result shows that exponential quantum advantage can arise even when the oracle itself has no exploitable structure.

This development raises a more physical question: does this unstructured quantum advantage persist for resources other than query complexity? 
If the separation were confined to query complexity, it could remain a formal feature of oracle access. 
If it persists for an experimentally accessible physical resource, it gives a more operational manifestation of quantumness in computation. 
Energy consumption~\cite{Meier2025} is a natural resource for this purpose~\cite{Benioff1982,Bennett1982,Feynman1986}. 
It is a central performance measure of modern computers due to increasing demands for sustainability~\cite{Auffeves2022,Jaschke2023,FellousAsiani2023,MarinGuzman2024,Campbell2026}, but it is not directly reducible to time, space, or query complexity. 
At the same time, in thermodynamics, energy is a macroscopic observable. 
It therefore occupies a special position as both a computational resource and a physical quantity accessible without directly probing microscopic quantum-mechanical properties. 
An exponential quantum advantage in energy consumption was proved in Ref.~\cite{Meier2025} for Simon's problem, but relied on a highly structured oracle and a decision task. 
It remained open whether a comparable separation exists for more general, unstructured search problems.

\begin{figure}
    \centering
    \includegraphics[width=\linewidth]{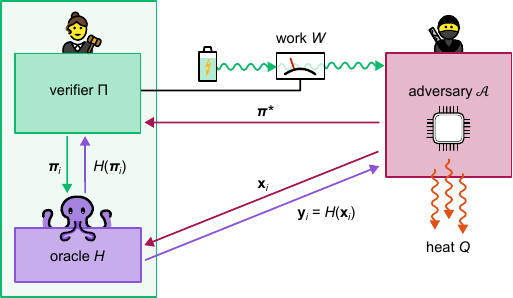}
    \caption{Thermodynamic proof of quantumness.
    The verifier samples an oracle $H$ and gives oracle access to an adversary.
    The adversary queries inputs $\mathbf x_i$, receives outputs $\mathbf y_i=H(\mathbf x_i)$, and returns a proof string $\pi$.
    The verifier checks the proof and simultaneously measures the energy consumption $W$ of the adversary.
    A correct proof with energy consumption below the classical thermodynamic lower bound certifies quantumness of the adversary's computation.}
    \label{fig:setup}
\end{figure}

In this work, we resolve this question and prove an exponential advantage in energy consumption for the unstructured Yamakawa--Zhandry search problem~\cite{Yamakawa2024}.
We formulate a \textit{thermodynamic proof of quantumness} in which a verifier observes only whether the adversary solves the task and how much energy the adversary consumes. 
For a uniformly random oracle, we prove that any classical machine solving the task with constant success probability must generate entropy of order $\lambda 2^{\lambda^c}$ for a security parameter $\lambda$ and a constant $c>0$, independently of its running time, memory size, and implementation details. 
By Landauer's principle~\cite{Landauer1961,Bennett1982}, this entropy implies an exponential lower bound on heat dissipation and hence on the energy consumption of any cyclic implementation. 
In contrast, the quantum prover of Yamakawa and Zhandry uses only polynomially many oracle queries and gates~\cite{Yamakawa2024}; under the standard setting where control and erasure costs scale polynomially with circuit size~\cite{Chiribella2021,Meier2025}, its energy consumption is polynomial in the security parameter.

Our technical contribution is to prove an entropy lower bound for all adaptive, keyed classical strategies that may run for an unbounded number of queries and stop at an input-dependent time. 
Thereby, we go beyond reinterpreting the query-complexity lower bound for the Yamakawa--Zhandry problem, and convert an averaged soundness statement for bounded-query adversaries into a lower bound on the stopping-time distribution of any successful unbounded adversary.
The proof then minimizes the entropy generated by a random oracle subject to this stopping-time constraint. 
The resulting lower bound includes explicit constants, allowing quantitative certification criteria for an experimental thermodynamic proof of quantumness. 
Thus, sufficiently low observed energy consumption in solving the task rules out every classical implementation satisfying the thermodynamic assumptions, and hence certifies that the solving device uses genuinely quantum operations.

\textit{Problem setting.}---%
We formulate the thermodynamic setting for solving the Yamakawa--Zhandry search problem~\cite{Yamakawa2024}, as illustrated in Fig.~\ref{fig:setup}; the detailed construction and the explicit parameters are presented in the Supplemental Material.
Let $\lambda$ be the security parameter. 
The oracle is a uniformly random function
\begin{align}
    H:\Sigma\to\{0,1\}^n ,
\end{align}
where the alphabet $\Sigma$ has size $|\Sigma|=2^{\lambda^{\Theta(1)}}$ and the code length satisfies $n=\Theta(\lambda)$. 
Concretely, $\Sigma$ can be represented by bit strings and is taken to be a vector space $\mathbb F_q^{\tilde m}$ over the finite field of order $q$ for suitable functions $q=q(\lambda)$ and $\tilde m=\tilde m(\lambda)$. 
Let $C_\lambda\subseteq \Sigma^n$ be the folded Reed--Solomon code~\cite{Reed1960} used in Ref.~\cite{Yamakawa2024}. 
The relevant property of this code is list recoverability, which ensures classical hardness of this problem.
For $\pi=(\mathbf x_1,\ldots,\mathbf x_n)\in\Sigma^n$, define
\begin{align}
    F_C^H(\pi)=(H_1(\mathbf x_1),\ldots,H_n(\mathbf x_n)),
\end{align}
where $H_i$ denotes the $i$th output bit of $H$. 
The verifier fixes the target string $1^n$ and accepts a proof $\pi$ with $\mathsf{Verify}^H(1^\lambda,k,\pi)=\top$ if
\begin{align}
    \pi\in C_\lambda,
    \qquad
    F_C^H(\pi)=1^n,
\end{align}
where the verification uses only polynomial time and polynomially many oracle queries.
Otherwise, the verifier returns $\perp$.
Equivalently, the adversary must find a codeword $\pi=(\mathbf x_1,\ldots,\mathbf x_n)$ such that $H_i(\mathbf x_i)=1$ for every $i$. 
The adversary may additionally receive an auxiliary key $k\in\mathcal K$ that serves as an advice string, sampled independently of $H$ from an arbitrary distribution. 
As shown in Fig.~\ref{fig:circuit}, the adversary performs internal reversible operations interleaved with oracle queries and finally outputs a proof string. 
We let
\begin{align}
    p_{\rm succ}
    =
    \Pr_{H,k}\!\left[
    \mathsf{Verify}^H(1^n,k,\mathcal A^H(1^n,k))=\top
    \right]
\end{align}
denote the success probability of an adversary with proof string $\pi=\mathcal A^H(1^n,k)$ over the random oracle and the key. 
Throughout the lower bound, we require
\begin{align}
\label{eq:Pr_Success_Geq_Delta}
    p_{\rm succ}\geq \Delta
\end{align}
for a constant $\Delta>0$ independent of $\lambda$.
For a constant $0<c<1$, this problem supports a proof of quantumness that is classically sound against $Q(\lambda)=2^{\lambda^c}$ oracle queries with soundness error $\varepsilon(\lambda)=2^{-\Omega(\lambda)}$, while admitting a quantum polynomial-time prover~\cite{Yamakawa2024}.

\begin{figure}
    \centering
    \includegraphics[width=\linewidth]{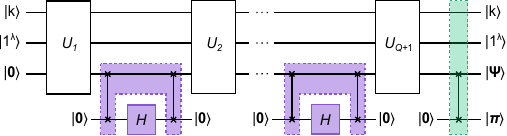}
    \caption{Adaptive oracle computation performed by the adversary.
    Given the key $k$ and the input $1^n$, the adversary applies internal reversible operations $U_1,\ldots,U_{Q+1}$ interleaved with $Q$ oracle queries.
    The number of queries may depend on previous oracle outputs in the classical case, and the oracle may be queried coherently in the quantum case.
    The final register contains the proof string $\pi$ returned to the verifier, and the adversary's internal memory state $\rho_{|H,k}$.
    The irreversible erasure step used to close the thermodynamic cycle is not shown.}
    \label{fig:circuit}
\end{figure}

We now specify the thermodynamic assumptions. 
The adversary is run repeatedly, and after each run its internal memory is reset to a standard initial state so that the same physical device can be used again. 
This forms a thermodynamic cycle.
For a fixed oracle and key, let $\rho_{|H,k}$ be the adversary's final memory state before erasure. 
Because the verifier samples $H$ and $k$, the state to be erased is
\begin{align}
\label{eq:rho_sumHk}
    \rho=\sum_{H,k}\Pr[H]\Pr[k]\rho_{|H,k}.
\end{align}
For a classical adversary, after removing redundant repeated queries without changing its success probability, the final memory can be taken to contain a classical transcript
\begin{align}
    (\mathbf X,\mathbf Y)
    =
    (\mathbf x_1,\mathbf x_2,\ldots;\mathbf y_1,\mathbf y_2,\ldots),
    \qquad
    \mathbf y_i=H(\mathbf x_i),
\end{align}
where the inputs may depend adaptively on previous outputs and on $k$. 
Then $\rho$ is diagonal in the standard basis, and its von Neumann entropy equals the Shannon entropy
\begin{align}
\label{eq:classical_entropy}
    S[\rho]=H[\mathbf X,\mathbf Y]
    =
    -\sum_{\mathbf X,\mathbf Y}
    \Pr[\mathbf X,\mathbf Y]\log[\Pr[\mathbf X,\mathbf Y]] .
\end{align}
Landauer's principle gives the heat dissipated into a bath at temperature $T$ when this state is reset:
\begin{align}
\label{eq:Q_geq_kBTS}
    Q_{\rm heat}\geq k_B T S[\rho].
\end{align}
For a closed computational cycle in which the device returns to its initial state, the consumed energy $W$ is at least the dissipated heat~\cite{Meier2025}. 
Thus, any lower bound on $S[\rho]$ gives an implementation-independent lower bound on energy consumption, apart from the explicit thermodynamic assumption of cyclic reuse and erasure.

\textit{Classical entropy lower bound.}---%
Our main classical result is the following: for the Yamakawa--Zhandry problem, any classical adversary achieving Eq.~\eqref{eq:Pr_Success_Geq_Delta} has
\begin{align}
\label{eq:main_entropy_bound}
    S[\rho]\geq \Theta(\lambda 2^{\lambda^c}),
\end{align}
for a constant $c>0$ determined by the code parameters. 
More explicitly, in the Supplemental Material, we prove a finite-$\lambda$ bound of the form
\begin{align}
\label{eq:explicit_entropy_schematic}
    S[\rho]&\geq
    (\Delta-\varepsilon(\lambda))\times\nonumber\\
    &\quad\left(
    n\,2^{\lambda^c}\log[2]
    +
    \log\!\left[
    (\Delta-\varepsilon(\lambda))(1-2^{-n})
    \right]
    \right),
\end{align}
with $\varepsilon(\lambda)=2^{-\Omega(\lambda)}$ for the folded Reed--Solomon code parameters, where $\log$ is the natural logarithm. 
Equation~\eqref{eq:main_entropy_bound} follows because $n=\Theta(\lambda)$ and $\Delta-\varepsilon(\lambda)$ is bounded below by a positive constant for fixed $\Delta>0$ and sufficiently large $\lambda$.

We sketch the proof idea while the details are presented in the Supplemental Material. 
Fix a key $k$ and consider a deterministic adaptive classical strategy, which is, without loss of generality, after absorbing the adversary's private randomness into the key. 
Let $\mathcal Y_{Q,{\rm S}}^k$ be the set of output transcripts for which the adversary stops after the $Q$th nonredundant query, and let $Y_{Q,{\rm S}}^k=|\mathcal Y_{Q,{\rm S}}^k|$. 
Since the oracle is uniformly random and the key is independent of the oracle, the output transcript conditioned on stopping after $Q$ queries is uniform over $\mathcal Y_{Q,{\rm S}}^k$. 
Moreover, the number $Q^{\rm S}$ of queries in this case satisfies
\begin{align}
\label{eq:PrQS|k_main}
    \Pr[Q^{\rm S}=Q|k]=\frac{Y_{Q,{\rm S}}^k}{2^{nQ}}.
\end{align}
It follows that the transcript entropy conditioned on $k$ satisfies
\begin{align}
\label{eq:HkY_main}
    H_k[\mathbf Y]
    \geq
    \sum_{Q\geq 1}
    \frac{Y_{Q,{\rm S}}^k}{2^{nQ}}
    \log[Y_{Q,{\rm S}}^k].
\end{align}
This lemma is a purely information-theoretic statement about adaptive sampling from a random oracle; it does not use the algebraic structure of the code.

The second step is to minimize the right-hand side of Eq.~\eqref{eq:HkY_main} under a stopping-time constraint. 
For each fixed key $k$, define
\begin{align}
    q_k=\Pr[Q^{\rm S}\geq Q^*|k].
\end{align}
Applying the fixed-key entropy bound with $q=q_k$ gives
\begin{align}
\label{eq:HkY_main_lower}
    H_k[\mathbf Y]
    \geq
    q_k n Q^*\log[2]
    +
    q_k\log[q_k(1-2^{-n})].
\end{align}
Thus, forcing a classical algorithm to make many random-oracle queries with nonzero probability forces it to store a large amount of output entropy before erasure.

It remains to obtain a stopping-time constraint for any adversary that succeeds with probability at least $\Delta$. 
Here we use the averaged soundness statement over the random oracle and the key, rather than a fixed-key soundness statement.
If a classical adversary succeeds with probability at least $\Delta$ but stops before $Q^*=2^{\lambda^c}$ queries except with probability smaller than $\Delta-\varepsilon(\lambda)$, then truncating it before the $Q^*$th query would give a bounded-query adversary with success probability larger than the Yamakawa--Zhandry soundness error. 
Therefore,
\begin{align}
\label{eq:avg_min_queries}
    \Pr_{H,k}[Q^{\rm S}\geq Q^*]\geq \Delta-\varepsilon(\lambda).
\end{align}
Averaging Eq.~\eqref{eq:HkY_main_lower} over $k$ and applying Jensen's inequality to the resulting convex function of $q_k$ yields the averaged finite-size bound in Eq.~\eqref{eq:explicit_entropy_schematic}. 
Finally, using
\begin{align}
    S[\rho]=H[\mathbf X,\mathbf Y]\geq H[\mathbf Y]\geq H[\mathbf Y|k],
\end{align}
gives Eq.~\eqref{eq:main_entropy_bound}. 
The argument allows the classical adversary arbitrary computation time, arbitrary memory, arbitrary adaptive control, and an unbounded stopping time. 
The lower bound arises only from the entropy of the random-oracle information that a successful classical strategy must acquire.

Landauer's principle with Eq.~\eqref{eq:main_entropy_bound} and $W_{\rm C}\geq Q_{\rm heat}$ gives the classical energy consumption lower bound
\begin{align}
\label{eq:classical_energy_bound}
    W_{\rm C}
    \geq
    k_B T\,\Theta(\lambda 2^{\lambda^c}),
\end{align}
for any classical thermodynamic-cycle implementation satisfying Eq.~\eqref{eq:Pr_Success_Geq_Delta}~\cite{Meier2025}. 
This is the thermodynamic soundness statement: a machine that solves the verification task with constant success probability and energy consumption below the bound cannot be classical under the stated assumptions.

\begin{figure}
    \centering
    \includegraphics[width=\linewidth]{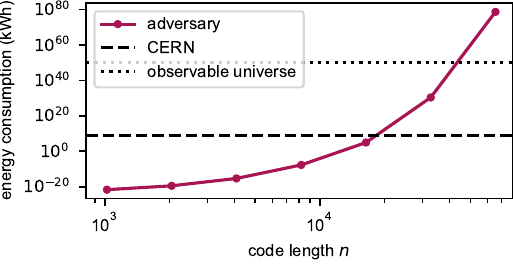}
    \caption
    {Classical energy-consumption lower bound for a representative parameter choice at $T=300\,{\rm K}$.
    The dots show $k_BTS$ using the explicit entropy bound from the Supplemental Material.
    The comparison lines indicate the annual energy consumption of CERN~\cite{CERN:EnvironmentReport2021-22} and the annual luminosity of the observable universe~\cite{enwiki:1319956122}.}
    \label{fig:energy}
\end{figure}

\textit{Quantum achievability.}---%
The separation is meaningful only if the same verification task is achievable by a quantum machine with polynomial energy consumption. 
Yamakawa and Zhandry construct a quantum polynomial-time prover for the above search problem whose failure probability is negligible and whose number of oracle queries is polynomial in $\lambda$~\cite{Yamakawa2024}. 
Let $D$ and $N_{\rm q}$ denote the depth and width of an implementation of this prover, including the reversible implementation of the oracle. 
For the Yamakawa--Zhandry prover, both satisfy
\begin{align}
    D={\rm poly}(\lambda),
    \qquad
    N_{\rm q}={\rm poly}(\lambda).
\end{align}
The fundamental energy consumption for erasure is at most polynomial because the final memory size is polynomial. 
Other costs, including finite-time erasure~\cite{VanVu2022,Rolandi2023,Taranto2023}, state preparation~\cite{Xuereb2025}, gate control~\cite{Chiribella2021,Chiribella2022,Castellano2025,Aaberg2014,Woods2024} and dissipative losses in the classical control hardware~\cite{Xuereb2023,FellousAsiani2023}, may dominate in real devices~\cite{FellousAsiani2023,Auffeves2022}. 
However, under the standard physical assumption that the total control and erasure cost scales polynomially with the implemented circuit size, as in the model of Ref.~\cite{Meier2025}, the total quantum energy consumption satisfies
\begin{align}
\label{eq:quantum_energy_bound}
    W_{\rm Q}=O({\rm poly}(\lambda)).
\end{align}
Equations~\eqref{eq:classical_energy_bound} and \eqref{eq:quantum_energy_bound} give an exponential separation in energy consumption between classical and quantum solutions of the same unstructured search problem.

\textit{Experimental instantiation.}---%
A truly uniformly random oracle of the required size cannot be implemented explicitly for large $\lambda$. 
As in cryptographic applications of the random-oracle model~\cite{Bellare1993}, the natural experimental instantiation is to replace the random oracle by a pseudorandom function, for example the SHA-2 hash function, as suggested in Ref.~\cite{Yamakawa2024}. 
To be queried by the quantum prover, this function must be implemented reversibly and coherently so that it acts on superposition inputs, as in the quantum random-oracle model~\cite{Boneh2011}. 
This does not require the verifier to possess a universal quantum computer. 
The verifier needs to instantiate and specify the oracle, check the returned classical proof, and measure the adversary's energy consumption; the ability to query the oracle coherently is required of the adversary's device.

The proposed setup for experimental demonstration consists of many independent rounds. 
In each round, the verifier chooses a fresh pseudorandom oracle instance and key, the adversary attempts to produce a proof, and the verifier records both success or failure and the adversary's energy consumption, for example, by monitoring the electrical energy supplied to the device over the full computational cycle. 
The protocol accepts the device only if the empirical success frequency is at least $\Delta+\eta$ for a fixed margin $\eta>0$ and the average energy consumption per round lies below the classical lower bound. 
If the true success probability were smaller than $\Delta$, the probability of observing empirical success frequency at least $\Delta+\eta$ is bounded by a Chernoff inequality~\cite{Chernoff1952},
\begin{align}
    \Pr[p_{\rm emp}\geq\Delta+\eta\mid p_{\rm succ}<\Delta]
    \leq
    e^{-\Omega(M)},
\end{align}
where $M$ is the number of rounds. 
Thus, the success condition can be certified statistically, while the measured energy consumption is compared against the explicit criteria obtained by evaluating the finite-size lower bound in Eq.~\eqref{eq:explicit_entropy_schematic}, which we plot in Fig.~\ref{fig:energy}. 
The proof would be unconditional in the ideal random-oracle model; with the proposed instantiations, the conclusion becomes conditional on the security assumption for the hash function, as in standard cryptographic practice~\cite{Bellare1993,Boneh2011,Canetti2004}.

\textit{Discussion.}---%
Thermodynamics describes physical systems through coarse-grained variables that are usually ignorant of microscopic information. 
At microscopic scales, however, quantum theory governs the physical laws underlying computation. 
It is therefore natural to ask whether a macroscopic thermodynamic observation can certify quantumness of the underlying process~\cite{OliveiraJunior2025,OliveiraJunior2026,Macedo2026}. 
Entropy production, heat capacity and heat flow can reveal quantum features in specific systems~\cite{Esposito2010,Reeb2014,Biswas2025,Comar2025,OliveiraJunior2025,Khandelwal2025a,Bourgeois2025,Xue2026,Xuereb2026}, but such witnesses are typically model-dependent or tailored to particular physical platforms.

By contrast, we have shown that, in a cryptographic oracle setting, energy consumption can provide a model-agnostic proof of quantumness. 
Using the unstructured search problem of Yamakawa and Zhandry~\cite{Yamakawa2024}, we proved that any classical machine solving the task with constant success probability must acquire and erase an exponentially large amount of random-oracle information. 
Landauer's principle converts this entropy requirement into an exponential lower bound on energy consumption. 
A quantum prover, by contrast, solves the same task with polynomial query and circuit complexity, and hence with polynomial energy consumption under standard scaling assumptions for implementation costs. 
In our formulation, the erasure of the adversary's memory after each round is part of the thermodynamic model: the lower bound applies to cyclic implementations that return the device to a standard initial state before reuse.
Developing a stronger verification framework that forces, or operationally certifies, such erasure for an arbitrary adversary is an important direction for future work.

Our result shows that the unstructured quantum advantage of Ref.~\cite{Yamakawa2024} is not merely a query-complexity phenomenon: it can be witnessed as an exponential separation in a thermodynamic quantity. 
Consequently, sufficiently low energy consumption in a successful implementation serves as a thermodynamic proof of quantumness, with explicit criteria for future proof-of-principle experimental demonstrations as illustrated in Fig.~\ref{fig:energy}.

\textit{Acknowledgments.---}%
The authors thank Natsuto Isogai, Takashi Yamakawa, Sebastian Haslebacher, and Marcus Huber for discussions.
FM is co-funded by the European Union through 
the ERC Consolidator Grant Cocoquest (Grant Agreement No.~101043705),
the ERC Synergy Grant SuperWave (Grant Agreement No.~101071882)
and by the Austrian Science Fund FWF (Grant DOI:~10.55776/COE1).
Views and opinions expressed are however those of the authors only and do not necessarily reflect those of the European Union, REA or UKRI.
Neither the European Union nor UKRI can be held responsible for them.
HY was supported by JST PRESTO Grant Number JPMJPR23FC, JST CREST Grant Number JPMJCR25I5, JST Moonshot R\&D Grant Number JPMJMS256J, and Faculty Research Funding from Google Quantum AI\@. 

\textit{Data availability statement.---}%
No data was generated in this study.

\bibliography{bibliography}

\clearpage\newpage

\onecolumngrid

\setcounter{secnumdepth}{2}
\section*{Supplemental Material}
\appendix

\listofappendices

\begin{appendices}

The Supplemental Material is organized as follows.
In Sec.~\ref{SM:problem_summary}, we recall the Yamakawa--Zhandry problem setup, including the code-theoretic definitions, the relevant folded Reed--Solomon parameters, and the corresponding proof-of-quantumness construction.
In Sec.~\ref{SM:proofs}, we prove the classical entropy lower bound by first analyzing adaptive random-oracle transcripts and then combining the resulting entropy bounds with the averaged soundness of the Yamakawa--Zhandry protocol.
In Sec.~\ref{SM:chernoff_proof}, we give the Chernoff-bound calculation used to justify the empirical success-rate test in the proposed experimental demonstration.

\section{\label{SM:problem_summary}Details of the problem setup}

In this appendix, we review the key notions needed to formulate the problem introduced by Yamakawa and Zhandry~\cite{Yamakawa2024}.
The problem is defined using a uniformly random oracle
\begin{align}
    H:\Sigma\to\{0,1\}^n
\end{align}
on a finite domain $\Sigma$.
The adversary's goal is to find an input sequence $\pi=(\mathbf x_1,\ldots,\mathbf x_n)$ such that
\begin{align}
    H_i(\mathbf x_i)=1
    \quad\text{for all } i=1,\ldots,n,
    \qquad
    \pi\in C,
\end{align}
where $H_i$ denotes the $i$th output bit of $H$, and $C\subseteq\Sigma^n$ is a specified code.
The code $C$ is chosen so that finding such a sequence is exponentially hard for classical adversaries with a bounded number of oracle queries.

\subsection{Definition of a suitable code}

We first recall the definitions needed to specify the code $C$.

\begin{definition}[Code~\cite{Yamakawa2024}]
A code of length $n\in\mathbb N$ over a finite alphabet $\Sigma$ is a subset $C\subseteq\Sigma^n$.
That is, $C$ is a set of $n$-tuples over $\Sigma$.
\end{definition}

The alphabet $\Sigma$ can be an arbitrary finite set.
When the alphabet has an algebraic structure, one can define linear and folded linear codes.

\begin{definition}[Linear code~\cite{Yamakawa2024}]
A linear code of length $n$ over a finite field $\mathbb F_q$ is a code $C\subseteq\mathbb F_q^n$ that is an $\mathbb F_q$-linear subspace.
That is, for any $\vec c_1,\vec c_2\in C$ and any $\alpha\in\mathbb F_q$, we have
$\vec c_1+\alpha \vec c_2\in C$ .
\end{definition}

\begin{definition}[Folded linear code~\cite{Yamakawa2024}]
A folded linear code of length $n$ over the alphabet $\Sigma=\mathbb F_q^r$ is a code $C\subseteq\Sigma^n$ that is an $\mathbb F_q$-linear subspace of $(\mathbb F_q^r)^n$.
Equivalently, it can be obtained by grouping the coordinates of an $\mathbb F_q$-linear code into blocks.
More explicitly, let $\widetilde C\subseteq\mathbb F_q^N$ be a linear code, and let $r$ be a positive integer dividing $N$.
The $r$-folded version of $\widetilde C$ is
\begin{align}
    \widetilde C^{(r)}
    :=
    \Big\{
    \big(
    (\mathbf x_1,\ldots,\mathbf x_r),
    (\mathbf x_{r+1},\ldots,\mathbf x_{2r}),
    \ldots,
    (\mathbf x_{N-r+1},\ldots,\mathbf x_N)
    \big)
    :
    \mathbf x\in \widetilde C
    \Big\}
    \subseteq
    (\mathbb F_q^r)^{N/r}.
\end{align}
\end{definition}

A folded linear code is therefore linear when regarded as an $\mathbb F_q$-linear subspace after identifying $(\mathbb F_q^r)^n$ with $\mathbb F_q^{rn}$.
For a linear code, we also use the following dual code.

\begin{definition}[Dual code~\cite{Yamakawa2024}]
For a linear code $C\subseteq\mathbb F_q^n$, its dual is the orthogonal complement with respect to the standard inner product:
\begin{align}
    C^\perp
    =
    \{\mathbf z\in\mathbb F_q^n:
    \mathbf x\cdot \mathbf z=0
    \text{ for all } \mathbf x\in C\}.
\end{align}
\end{definition}

If $C$ has dimension $k$, then its dual $C^\perp$ has dimension $n-k$.
The list recoverability is central to the choice of codes suitable for the Yamakawa--Zhandry problem.

\begin{definition}[List recovery~\cite{Yamakawa2024}]
\label{def:list_recovery}
A code $C\subseteq\Sigma^n$ is $(\zeta,\ell,L)$-list recoverable if, for any subsets $S_i\subseteq\Sigma$ with $|S_i|\leq \ell$ for all $i=1,\ldots,n$, we have
\begin{align}
\label{eq:list_recoverability_definition}
    \left|
    \left\{
    \vec{\mathbf x}=(\mathbf x_1,\ldots,\mathbf x_n)\in C:
    \left|\{i:\mathbf x_i\in S_i\}\right|\geq (1-\zeta)n
    \right\}
    \right|
    \leq L .
\end{align}
\end{definition}

The intuition is as follows.
Given lists of possible letters $S_1,\ldots,S_n$, list recoverability bounds the number of codewords whose coordinates lie in the corresponding lists for at least a fraction $1-\zeta$ of positions.
Thus, for arbitrary lists of size at most $\ell$, only at most $L$ codewords in $C$ can be recovered in this approximate sense.

An equivalent formulation uses a distance from a sequence of lists~\cite{Tamo2024}.
Let $S=(S_1,\ldots,S_n)$ and define
\begin{align}
    \operatorname{dist}(\mathbf x,S)
    =
    \left|\{i:\mathbf x_i\notin S_i\}\right|.
\end{align}
The distance is zero exactly when every coordinate of $\mathbf x$ lies in the corresponding list.
More generally, it counts the number of coordinates not contained in the lists.
The condition in Eq.~\eqref{eq:list_recoverability_definition} can be rewritten as
\begin{align}
    \left|
    \left\{
    \vec{\mathbf x}\in C:
    \operatorname{dist}(\mathbf x,S)\leq \zeta n
    \right\}
    \right|
    \leq L ,
\end{align}
up to the harmless convention of using $<$ or $\leq$ when $\zeta n$ is not an integer.
Thus, $C$ is $(\zeta,\ell,L)$-list recoverable if the number of codewords in $C$ within distance at most $\zeta n$ from any list sequence $S$ is at most $L$.

The codes used in the Yamakawa--Zhandry construction satisfy the following properties.

\begin{lemma}[Suitable codes~\cite{Yamakawa2024}]
\label{lemma:suitable_codes}
Let $0<c<c'<1$ be arbitrary constants.
There exists an explicit family $\{C_\lambda\}_{\lambda\in\mathbb N}$ of folded linear codes over alphabets $\Sigma=\mathbb F_q^r$ of length $n$, where
\begin{align}
    |\Sigma|=2^{\lambda^{\Theta(1)}},
    \qquad
    n=\Theta(\lambda),
    \qquad
    |C_\lambda|\geq 2^{n+\lambda},
\end{align}
such that the following hold:
\begin{enumerate}
    \item $C_\lambda$ is $(\zeta,\ell,L)$-list recoverable with
    \begin{align}
        \zeta=\Omega(1),
        \qquad
        \ell=2^{\lambda^c},
        \qquad
        L=2^{\widetilde O(\lambda^{c'})}.
    \end{align}

    \item There exists an efficient deterministic decoding algorithm $\mathsf{Decode}_{C_\lambda^\perp}$ for $C_\lambda^\perp$ with the following property.
    Let $\mathcal D$ be the distribution over $\Sigma$ that outputs $\mathbf 0$ with probability $1/2$ and otherwise outputs a uniformly random element of $\Sigma\setminus\{\mathbf 0\}$.
    Then
    \begin{align}
        \Pr_{\mathbf e\xleftarrow{\$}\mathcal D^n}
        \left[
        \forall \mathbf x\in C_\lambda^\perp:
        \mathsf{Decode}_{C_\lambda^\perp}(\mathbf x+\mathbf e)=\mathbf x
        \right]
        =
        1-2^{-\Omega(\lambda)} .
    \end{align}
\end{enumerate}
Here, $\mathbf x\xleftarrow{\$}\mathcal A$ denotes the output of a randomized classical or quantum algorithm $\mathcal A$, and $\lambda$ is the security parameter.
\end{lemma}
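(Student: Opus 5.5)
This lemma is imported from Yamakawa and Zhandry, so the plan is to reconstruct its proof from known facts about folded Reed--Solomon codes. Every ingredient is taken off the shelf; the work lies in choosing parameters so that all of them hold at once.

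\textbf{Code and parameters.} Take $C_\lambda$ to be the $r$-folded Reed--Solomon code over $\mathbb F_q$. Codewords are evaluations of polynomials $f$ of degree $<k$ at the points $\gamma^{j}$, grouped into blocks $(f(\gamma^{ir}),\ldots,f(\gamma^{ir+r-1}))$, where $\gamma$ generates $\mathbb F_q^\times$ and $n=N/r$.

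Choose $q$ a power of two with $\log q=\lambda^{\Theta(1)}$ and $n=\Theta(\lambda)$. Choose $r$ large enough that $|\Sigma|=q^r=2^{\lambda^{\Theta(1)}}$ exceeds $\ell=2^{\lambda^c}$ by a polynomial-exponent margin.

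Fix the rate $R=k/N$ to be a constant, say slightly below $1/2$. Then:
\begin{itemize}
\item $|C_\lambda|=q^{k}\geq 2^{n+\lambda}$ holds trivially, since $k\log q\gg n+\lambda$.
\item The dual $C_\lambda^\perp$ is again a folded generalized Reed--Solomon code, with rate $1-R$. This follows from the standard duality of (generalized) RS codes, and folding commutes with taking duals blockwise up to column multipliers.
\end{itemize}

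\textbf{List recoverability.} Invoke the Guruswami--Rudra list-recovery theorem for folded RS codes, in its sharpened form (Kopparty--Ron-Zewi--Saraf--Wootters, or Tamo) with list size $L=(\ell/\epsilon)^{O((1/\epsilon)\log(\ell/\epsilon))}$ or similar. It states that for $r\gtrsim \ell/\epsilon^2$ the code is $(\zeta,\ell,L)$-list recoverable with $\zeta=1-R-\epsilon$ and $L\leq q^{s}$. Here $s$ is the dimension of the Guruswami--Wang linear-algebraic subspace, and $s=O(\ell/\epsilon)$ at worst.

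\textbf{Main obstacle.} The crux is making $L=2^{\widetilde O(\lambda^{c'})}$ while $\ell=2^{\lambda^c}$ and $\zeta$ stays constant. A naive bound $L\approx q^{\ell}$ is far too large. I would follow Yamakawa--Zhandry and use the subspace-design pruning result, which gives a final list size $\ell^{O(\log\ell)}$ independent of $q$ after folding. This yields
\[
L=2^{O(\lambda^{2c})}\quad\text{or better},
\]
and then the bound must be adjusted to meet $c'$. If the generic bound does not reach $\widetilde O(\lambda^{c'})$ for every $c<c'$, I would rescale: reparametrize $\ell=2^{\lambda^{c}}$ against the effective security parameter. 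This may require carefully balancing $r$, $\epsilon$ and $\log q$.

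\textbf{Decoding the dual.} Part~2 is unique decoding of $C_\lambda^\perp$ under noise from $\mathcal D^n$. Each coordinate is corrupted with probability $1/2$, independently. The dual has constant rate $1-R>1/2$ and is list-decodable up to a fraction $R-\epsilon$ of errors.

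I would proceed as follows:
\begin{enumerate}
\item By a Chernoff bound, the number of nonzero $\mathbf e_i$ is below $(1/2+\delta)n$ except with probability $2^{-\Omega(n)}=2^{-\Omega(\lambda)}$. This exceeds the unique-decoding radius, so list decoding is needed.
\item Use folded-RS list decoding up to radius $1-(1-R)-\epsilon=R-\epsilon$. This works provided $R>1/2+\delta+\epsilon$, so I would instead set $R$ slightly above $1/2$. I would then re-check that list recoverability in the previous step still gives $\zeta=1-R-\epsilon=\Omega(1)$, which it does since $R<1$.
\item Because the nonzero error entries are uniform in $\Sigma\setminus\{\mathbf 0\}$ with $|\Sigma|$ exponentially large, any wrong candidate in the polynomial-size output list agrees with the received word on the corrupted positions only with probability $|\Sigma|^{-\Omega(n)}$. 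Union-bounding over the list and over all $\mathbf x$ (a translation-invariance argument, since decoding $\mathbf x+\mathbf e$ reduces to decoding $\mathbf e$ by linearity) gives unique correct output with probability $1-2^{-\Omega(\lambda)}$.
\item The decoder selects the unique candidate consistent with at least $(1/2-\delta)n$ positions.
\end{enumerate}

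Determinism and efficiency follow from the Guruswami--Wang linear-algebraic decoder.
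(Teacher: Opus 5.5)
\textbf{Main gap: the list-size bound.} The paper does not reprove this lemma. It imports it from Yamakawa--Zhandry and records the instantiation in Sec.~A.2: $q=2^{2\lfloor\log_2\lambda\rfloor}\le\lambda^2$, folding parameter $m=2^{\lfloor\log_2\lambda\rfloor}+1$, length $n=2^{\lfloor\log_2\lambda\rfloor}-1$ (so $mn=q-1$), rate $\alpha\in(5/6,1)$, any constant $\zeta<1-\alpha$, and $L\le\lambda^{2\lambda^{c'}}$. Against this, your proposal has a genuine gap exactly where you place the main obstacle: you never reach $L=2^{\widetilde O(\lambda^{c'})}$, and the tools you name cannot reach it in this regime.
\begin{itemize}
\item The list-recovery theorem as you state it needs folding parameter $r\gtrsim\ell/\epsilon^2=2^{\lambda^c}/\epsilon^2$. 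That forces $|\Sigma|=q^r\ge 2^{2^{\lambda^c}}$, contradicting $|\Sigma|=2^{\lambda^{\Theta(1)}}$.
\item Your subspace dimension $s=O(\ell/\epsilon)$ gives $L\le q^{O(2^{\lambda^c})}$, which is far too large.
\item The pruned bound $\ell^{O(\log\ell)}=2^{O(\lambda^{2c})}$ also fails. It is not $2^{\widetilde O(\lambda^{c'})}$ whenever $c<c'<2c$. For $c>1/2$ it exceeds $2^{\zeta n}$, so it cannot make the soundness error $L2^{-\lfloor\zeta n\rfloor}$ small.
\item Rescaling the security parameter cannot repair this. The quantities $\ell$, $L$ and $n$ are all indexed by the same $\lambda$, and the lemma is claimed for every $0<c<c'<1$.
\end{itemize}

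\textbf{The missing idea.} It is visible in the quoted numbers, since $q^{\lambda^{c'}}\le\lambda^{2\lambda^{c'}}$. Keep the base field polynomially small ($\log q=O(\log\lambda)$) and obtain the large alphabet from the folding, $|\Sigma|=q^m\approx 2^{2\lambda\log\lambda}$. Then use Guruswami--Rudra $s$-variate interpolation, where the input-list size enters the agreement threshold only through $\ell^{1/(s+1)}$; the condition is roughly $1-\zeta>(1+o(1))\frac{m}{m-s+1}\,\ell^{1/(s+1)}R^{s/(s+1)}$. Choose $s=\lambda^{c'}$. Then $\ell^{1/(s+1)}\to1$ because $c<c'$, and $m/(m-s+1)\to1$ because $c'<1$. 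So every constant $\zeta<1-R$ is admissible, and the candidates lie in a set of size at most $q^{s}\le\lambda^{2\lambda^{c'}}$. Your choice $\log q=\lambda^{\Theta(1)}$ spoils this even with the right $s$: since $s$ must be at least of order $\lambda^c$, you get $q^s\ge 2^{\lambda^{c+\Theta(1)}}$, which exceeds $2^{\widetilde O(\lambda^{c'})}$ once $c'$ is close to $c$.

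\textbf{Secondary flaw in Part~2.} You cannot union bound ``over the output list,'' because the list depends on $\mathbf e$. The count must run over all nonzero $\mathbf y\in C_\lambda^\perp$, as follows.
\begin{itemize}
\item A competitor must vanish on a set $Z$ of error-free blocks and coincide with $\mathbf e$ on a set $A$ of corrupted blocks, with $|Z|+|A|\ge(1/2-\delta)n$.
\item The subcode vanishing on $Z$ has about $|\Sigma|^{(1-R)n-|Z|}$ elements.
\item The restriction $\mathbf e|_A$ is uniform on $(\Sigma\setminus\{\mathbf 0\})^{|A|}$.
\end{itemize}
Hence the failure probability is roughly $4^n|\Sigma|^{(1/2-R+\delta)n}$. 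This is small precisely when the dual rate $1-R$ is below $1/2$ with margin, which the paper's $\alpha>5/6$ satisfies comfortably.
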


Yamakawa and Zhandry argue that folded Reed--Solomon codes~\cite{Guruswami1999,Rudra2007,Guruswami2008} are the only known codes satisfying all the required properties in their construction.
For the present work, only the list-recoverability condition in Lemma~\ref{lemma:suitable_codes} is used directly.
Further details on list recoverability are given in Sec.~\ref{sec:details_list_recoverability}.
Lemma~\ref{lemma:suitable_codes} in Ref.~\cite{Yamakawa2024} also includes a third condition, which is not directly needed for our lower bound.
The second condition above is required for the quantum algorithm in the Yamakawa--Zhandry problem:
errors $\mathbf e$ affecting the dual code $C_\lambda^\perp$ must be corrected efficiently.
That is, the decoding algorithm $\mathsf{Decode}_{C_\lambda^\perp}$ should recover $\mathbf x$ from $\mathbf x+\mathbf e$ for all $\mathbf x\in C_\lambda^\perp$ with overwhelmingly high probability over \smash{$\mathbf e\xleftarrow{\$}\mathcal D^n$}.

\subsection{Summary of constant factors for list-recoverable codes}
\label{sec:details_list_recoverability}

To make explicit the constant factors appearing in the classical lower bound, we review in more detail the list-recoverability criterion in Lemma~\ref{lemma:suitable_codes} and Definition~\ref{def:list_recovery}.

Following the construction in Sec.~4.3.2 of Ref.~\cite{Yamakawa2024}, the folded Reed--Solomon (FRS) codes are defined over the finite field $\mathbb F_q$, where
\begin{align}
    q=2^{2\lfloor\log_2\lambda\rfloor}
\end{align}
is a prime power.
We then define
\begin{align}
    m=2^{\lfloor\log_2\lambda\rfloor}+1,
    \qquad
    n=2^{\lfloor\log_2\lambda\rfloor}-1.
\end{align}
The FRS code $C_\lambda$ is a code of length $n$ over the alphabet $\Sigma=\mathbb F_q^m$, defined by
\begin{align}
    C_\lambda
    =
    \Big\{
    \big(
    (p(\gamma),\ldots,p(\gamma^m)),
    (p(\gamma^{m+1}),\ldots,p(\gamma^{2m})),
    \ldots,
    (p(\gamma^{(n-1)m+1}),\ldots,p(\gamma^{mn}))
    \big)
    :
    p\in\mathbb F_q[X],\ \deg[p]\leq k
    \Big\},
\end{align}
where $\gamma\in\mathbb F_q^*$ is a generator of the multiplicative group of $\mathbb F_q$, and $\mathbb F_q[X]$ is the polynomial ring over $\mathbb{F}_q$, and $\deg[p]$ is the degree of the polynomial $p$.
The product of the folding parameter and the folded code length is
\begin{align}
    mn
    =
    \left(2^{\lfloor\log_2\lambda\rfloor}+1\right)
    \left(2^{\lfloor\log_2\lambda\rfloor}-1\right)
    =
    2^{2\lfloor\log_2\lambda\rfloor}-1
    =
    q-1.
\end{align}
Thus, $C_\lambda$ is the $m$-folded Reed--Solomon code, often denoted by
\begin{align}
    {\rm RS}_{\mathbb F_q,\gamma,k}^{(m)},
\end{align}
of length $n$ over the alphabet $\Sigma=\mathbb F_q^m$, or equivalently a Reed--Solomon code of length $N=mn$ over the field $\mathbb F_q$ before folding.

Given arbitrary constants $0<c<c'<1$, the code $C_\lambda$ is $(\zeta,\ell,L)$-list recoverable for the following parameter choices~\cite{Yamakawa2024,Rudra2007}:
\begin{itemize}
    \item $0<\zeta<1-\alpha$ is an arbitrary constant, where $5/6<\alpha<1$ and the degree parameter of the FRS code is chosen as $k=\lfloor\alpha N\rfloor$.
    \item $\ell=2^{\lambda^c}$, as in Lemma~\ref{lemma:suitable_codes}.
    \item $L\leq \lambda^{2\lambda^{c'}}$, by Lemma~4.3 of Ref.~\cite{Yamakawa2024}.
\end{itemize}
The parameter $L$ can be tightened further using recent results from Ref.~\cite{Tamo2024}.

\subsection{\label{SM:YZ_construction}Yamakawa--Zhandry's construction for a proof of quantumness}

To construct the proof of quantumness, we work in a random-oracle model.
In the classical random-oracle model (CROM)~\cite{Bellare1993}, one considers a uniformly random function
\begin{align}
    H:\{0,1\}^a\to\{0,1\}^b,
\end{align}
where the input length $a=a(\lambda)$ and output length $b=b(\lambda)$ are functions of the security parameter $\lambda$.
A classical adversary can query an input bit string $\mathbf x\in\{0,1\}^a$ and receives the corresponding output $\mathbf y=H(\mathbf x)$.
The quantum random-oracle model (QROM)~\cite{Boneh2011} generalizes this model by allowing coherent queries in superposition.
The corresponding oracle unitary is written as
\begin{align}
    O_H\ket{\mathbf x,\mathbf z}
    =
    \ket{\mathbf x,\mathbf z\oplus H(\mathbf x)},
\end{align}
and hence, by linearity,
\begin{align}
    O_H
    \sum_{\mathbf x,\mathbf z}\alpha_{\mathbf x,\mathbf z}
    \ket{\mathbf x,\mathbf z}
    =
    \sum_{\mathbf x,\mathbf z}\alpha_{\mathbf x,\mathbf z}
    \ket{\mathbf x,\mathbf z\oplus H(\mathbf x)}.
\end{align}

For the Yamakawa--Zhandry problem, let $\{C_\lambda\}_\lambda$ be a family of folded linear codes of length $n$ over the alphabet $\Sigma=\mathbb F_q^m$.
Assume that this code family satisfies the requirements of Lemma~\ref{lemma:suitable_codes} for some constants $0<c<c'<1$; in particular, one may take $C_\lambda$ to be the appropriate folded Reed--Solomon codes.
Let
\begin{align}
    H:\Sigma\to\{0,1\}^n
\end{align}
be a random oracle, chosen uniformly from all functions $\Sigma\to\{0,1\}^n$.
For $1\leq i\leq n$, let $H_i$ denote the $i$th output bit of $H$.

A proof-of-quantumness protocol $\Pi=(\mathsf{Prove},\mathsf{Verify})$ is defined as follows.
\begin{itemize}
    \item $\mathsf{Prove}$ is a quantum polynomial-time (QPT) algorithm that takes a key $k$ and the security parameter $1^\lambda$, makes at most ${\rm poly}(\lambda)$ queries to the oracle, and returns a classical proof $\pi$.
    For the detailed construction of $\mathsf{Prove}$, we refer to Ref.~\cite{Yamakawa2024}.

    \item $\mathsf{Verify}$ is a verifier with oracle access.
    Given $1^\lambda$, $k$, and a candidate proof
    \begin{align}
        \pi=\mathbf x=(\mathbf x_1,\ldots,\mathbf x_n)\in\Sigma^n,
    \end{align}
    the verifier outputs $\top$ if
    \begin{align}
        \mathbf x\in C_\lambda
        \quad\text{and}\quad
        H_i(\mathbf x_i)=1
        \quad\text{for all } i=1,\ldots,n.
    \end{align}
    Otherwise, it outputs $\perp$.
\end{itemize}

The following two definitions state when a proof-of-quantumness protocol $\Pi$ is correct and sound.
Correctness means that there exists an efficient quantum prover that makes the verifier reject only with negligible probability.
Soundness means that any classical adversary making a bounded number of oracle queries can make the verifier accept only with small probability.

\begin{definition}[Correctness~\cite{Yamakawa2024}]
\label{def:correctness}
A proof-of-quantumness protocol $\Pi$ satisfies correctness if
\begin{align}
    \Pr_{H,k}
    \left[
    \mathsf{Verify}^H(1^\lambda,k,\pi)=\perp
    :
    \pi\xleftarrow{\$}\mathsf{Prove}^H(1^\lambda,k)
    \right]
    \leq
    \mathsf{negl}(\lambda).
\end{align}
\end{definition}

Here, $\mathsf{negl}(\lambda)$ denotes a negligible function that decays faster than any polynomial; i.e., for every constant $\alpha>0$,
\begin{align}
    \mathsf{negl}(\lambda)=o(\lambda^{-\alpha})
\end{align}
as $\lambda\to\infty$.

\begin{definition}[Soundness~\cite{Yamakawa2024}]
\label{def:soundness}
A proof-of-quantumness protocol $\Pi$ is $(Q(\lambda),\varepsilon(\lambda))$-sound in the classical random-oracle model if, for any classical adversary $\mathcal A$ with unbounded computational time that makes at most $Q(\lambda)$ classical queries to $H$, we have
\begin{align}
    \Pr_{H,k}
    \left[
    \mathsf{Verify}^H(1^\lambda,k,\pi^*)=\top
    :
    \pi^*\xleftarrow{\$}\mathcal A^H(1^\lambda,k)
    \right]
    \leq
    \varepsilon(\lambda).
\end{align}
\end{definition}

\begin{lemma}[Soundness of the Yamakawa--Zhandry verifier~\cite{Yamakawa2024}]
\label{lemma:soundness}
For the above construction, the verifier satisfies $(2^{\lambda^c},\varepsilon(\lambda))$-soundness in the CROM, with $\varepsilon(\lambda)=L2^{-\lfloor \zeta n \rfloor} = e^{-\Omega(\lambda)}$.
\end{lemma}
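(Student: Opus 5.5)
We prove soundness by the standard list-recovery argument from Ref.~\cite{Yamakawa2024}, specialised to the parameters of Lemma~\ref{lemma:suitable_codes}.

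\textbf{Reduction to a non-adaptive output event.} First reduce to a deterministic adversary. Fix the key $k$ and the adversary's internal randomness, and average at the end; this is allowed because $k$ is independent of $H$. We may assume the adversary queries every $\mathbf x_i$ of its output $\pi^*=(\mathbf x_1,\ldots,\mathbf x_n)$ before stopping. This costs at most $n$ extra queries. To stay strictly within the budget $Q=2^{\lambda^c}$, one either rescales $\ell$ or notes that $Q+n\le 2\cdot 2^{\lambda^c}$ only shifts constants. I would handle this by defining the lists with $\ell$ equal to the total number of distinct queries.

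After the run, let $\mathcal X\subseteq\Sigma$ be the set of queried points, with $|\mathcal X|\le \ell$. Define the lists
\[
S_i=\{\mathbf x\in\mathcal X : H_i(\mathbf x)=1\}\subseteq\Sigma .
\]
A successful $\pi^*$ satisfies $\pi^*\in C_\lambda$ and $\mathbf x_i\in S_i$ for all $i$.

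\textbf{Tracking each adaptively queried codeword.} The key step is to bound the probability that the adversary ``completes'' a codeword. For each codeword $\vec{\mathbf x}\in C_\lambda$, look at the moment its coordinates get queried. Each output bit $H_i(\mathbf x_i)$ is a fresh uniform bit at the time $\mathbf x_i$ is first queried, independent of everything seen before.

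Consider the random set
\[
\mathcal L=\{\vec{\mathbf x}\in C_\lambda:\ \text{all } \mathbf x_i\in\mathcal X\}.
\]
By list recoverability with $\zeta=0$ weakened to $\zeta$, applied to the lists $(\mathcal X,\ldots,\mathcal X)$, we get $|\mathcal L|\le L$. The same bound holds for the codewords that have at least $(1-\zeta)n$ coordinates in $\mathcal X$.

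For a fixed codeword, fix the order in which its coordinates become queried. Its $\lceil(1-\zeta) n\rceil$-th queried coordinate index marks the time it enters the ``close'' set, and at that moment at least $\lfloor\zeta n\rfloor$ of its coordinates are still unqueried. Each of their bits $H_i$ is fresh uniform, so the probability that all of them equal $1$ is $2^{-\lfloor\zeta n\rfloor}$. Coordinates that collide at the same point $\mathbf x$ but with different $i$ read different bits, so the bits remain independent.

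\textbf{Union bound.} Taking a union bound over the at most $L$ codewords that ever become close gives
\[
\Pr[\text{success}]\le \mathbb E\big[\#\{\text{close codewords fully satisfied}\}\big]\le L\,2^{-\lfloor\zeta n\rfloor}.
\]
Making this rigorous uses a martingale/stopping-time argument: enumerate the close codewords in order of becoming close, and note that each successive one has conditional success probability at most $2^{-\lfloor\zeta n\rfloor}$ given the history. This is the main obstacle, since adaptivity means the set of close codewords is itself random. To resolve it, I would charge each codeword at its entry time and sum the conditional probabilities, using that the count of entries is deterministically at most $L$.

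Finally, with $L=2^{\widetilde O(\lambda^{c'})}$, $c'<1$, and $n=\Theta(\lambda)$, we obtain $L\,2^{-\lfloor\zeta n\rfloor}=e^{-\Omega(\lambda)}$.
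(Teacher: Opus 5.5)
Your strategy is the standard Yamakawa--Zhandry argument that the paper imports by citation without reproducing: list recoverability caps how many codewords can ever become $(1-\zeta)$-close to the queried set, and each such codeword is charged once, when it enters, with the probability that its still-unqueried bits all equal $1$. The route is right, but two steps fail as written.

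First, the query budget. After adding the $n$ output queries you may have $2^{\lambda^c}+n>\ell$ distinct queried points, so the $(\zeta,\ell,L)$ guarantee does not apply to $(\mathcal X,\ldots,\mathcal X)$. ``Defining the lists with $\ell$ equal to the total number of distinct queries'' does not help, because $\ell$ is a property of the code, not something you choose. (Invoking list recoverability at a slightly larger $\ell$ works asymptotically, but then the $L$ in $\varepsilon$ is the list size for that larger $\ell$.) The clean fix is not to add the extra queries at all, so that $|\mathcal X|\le 2^{\lambda^c}=\ell$. Then either the output $\pi^*$ is close to $(\mathcal X,\ldots,\mathcal X)$ at the end, and is one of the at most $L$ codewords charged at entry. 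Or more than $\zeta n$ of its coordinates are unqueried; since $\pi^*$ is a function of the full transcript, those bits are still uniform given the transcript, and this case contributes at most $2^{-\lfloor\zeta n\rfloor-1}$.

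Second, the entry-time count. A single query to $\mathbf x$ reveals $H_i(\mathbf x)$ for all $i$, so every coordinate of a codeword carrying the symbol $\mathbf x$ becomes queried at once. For the folded Reed--Solomon code, the codeword of a constant polynomial has all $n$ symbols equal. Hence your claim that at the entry moment at least $\lfloor\zeta n\rfloor$ coordinates are still unqueried is false in general. Condition instead on the history \emph{before} the query that makes a codeword close. At that point the codeword has more than $\zeta n$, hence at least $\lfloor\zeta n\rfloor+1$, unqueried coordinates, whose bits are independent and uniform. The set $E_j$ of codewords entering at query $j$ is determined by that history, because the $j$th query point is. This gives $\Pr[\exists\, c\in\bigcup_j E_j \text{ valid}]\le \mathbb E\bigl[\sum_j|E_j|\bigr]\,2^{-\lfloor\zeta n\rfloor-1}\le L\,2^{-\lfloor\zeta n\rfloor-1}$. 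Together with the non-close case, the total is $(L+1)\,2^{-\lfloor\zeta n\rfloor-1}\le L\,2^{-\lfloor\zeta n\rfloor}$ for $L\ge1$, which is exactly the stated $\varepsilon(\lambda)$.
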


\section{\label{SM:proofs}Proofs for the classical lower bound}

In this appendix, we prove the three main ingredients used to derive the classical entropy lower bound.
We first prove Eq.~\eqref{eq:HkY_main} from the main text in Sec.~\ref{SM:proof1}.
We then prove Eq.~\eqref{eq:HkY_main_lower} from the main text in Sec.~\ref{SM:proof2}.
Finally, in Sec.~\ref{SM:proof_lowerbound}, we combine these ingredients to obtain the finite lower bound in Eq.~\eqref{eq:explicit_entropy_schematic} of the main text.

\subsection{\label{SM:proof1}Proof of Eq.~\eqref{eq:HkY_main} in the main text}

We begin by restating Eq.~\eqref{eq:HkY_main} from the main text as a lemma.

\begin{lemma}[Entropy of a keyed deterministic adversary]
\label{lemma:entropy_general}
Let $H:\Sigma\to\{0,1\}^n$ be a uniformly random oracle, and let $\mathcal A$ be a keyed deterministic classical adversary querying the oracle adaptively.
Without loss of generality, assume that repeated queries to the same oracle input are removed, so that the query transcript contains only nonredundant queries.
For a fixed key $k$, let $Y_{Q,{\rm S}}^k$ be the number of output transcripts
\begin{align}
    (\mathbf y_1,\ldots,\mathbf y_Q)\in(\{0,1\}^n)^Q
\end{align}
for which the adversary stops after the $Q$th query.
Then the entropy of the output transcript conditioned on the key $k$ is lower bounded by
\begin{align}
    H_k[\mathbf Y]
    \geq
    \sum_{Q\geq 1}
    \frac{Y_{Q,{\rm S}}^k}{2^{nQ}}
    \log Y_{Q,{\rm S}}^k .
\end{align}
\end{lemma}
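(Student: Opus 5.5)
Fix the key $k$. Since $\mathcal A$ is deterministic given $k$, the entire query transcript is a deterministic function of the oracle answers received so far. Concretely, the $i$th query $\mathbf x_i$ is a function of $(k,\mathbf y_1,\ldots,\mathbf y_{i-1})$, and the decision to stop after the $Q$th query is likewise a function of $(k,\mathbf y_1,\ldots,\mathbf y_Q)$.

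The first step is to show that every realizable output transcript of length $Q$ has probability exactly $2^{-nQ}$. Because repeated queries have been removed, the inputs $\mathbf x_1,\ldots,\mathbf x_Q$ along any execution path are pairwise distinct. For a uniformly random $H$, the values of $H$ on distinct points are independent and uniform on $\{0,1\}^n$. Conditioning successively on $\mathbf y_1,\ldots,\mathbf y_{i-1}$ therefore fixes $\mathbf x_i$ at a fresh point, so $\mathbf y_i$ is uniform and independent of the past. By the chain rule, each prefix $(\mathbf y_1,\ldots,\mathbf y_Q)$ occurs with probability $2^{-nQ}$. The stopping event depends only on this prefix, so each complete transcript in $\mathcal Y_{Q,{\rm S}}^k$ has probability $2^{-nQ}$. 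Summing over the set gives $\Pr[Q^{\rm S}=Q|k]=Y_{Q,{\rm S}}^k/2^{nQ}$, which is Eq.~\eqref{eq:PrQS|k_main}.

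The second step is to compute the entropy directly. The complete transcripts for different $Q$ are distinct strings, since they have different lengths. Moreover, a stopped transcript is never a proper prefix of another, because stopping is decided deterministically from the prefix. Hence the support of $\mathbf Y$ given $k$ is the disjoint union $\bigcup_Q\mathcal Y_{Q,{\rm S}}^k$. Then
\begin{align}
H_k[\mathbf Y]=\sum_{Q}\sum_{\mathbf y\in\mathcal Y_{Q,{\rm S}}^k}2^{-nQ}\log 2^{nQ}=\sum_Q \frac{Y_{Q,{\rm S}}^k}{2^{nQ}}\,nQ\log 2 .
\end{align}
Since $\mathcal Y_{Q,{\rm S}}^k\subseteq(\{0,1\}^n)^Q$, we have $Y_{Q,{\rm S}}^k\le 2^{nQ}$, so $nQ\log2\ge\log Y_{Q,{\rm S}}^k$ termwise, which yields the claimed bound. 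Equivalently, one can write $H_k[\mathbf Y]=H_k[Q^{\rm S}]+H_k[\mathbf Y|Q^{\rm S}]$ with the conditional law uniform on $\mathcal Y_{Q,{\rm S}}^k$, and drop the nonnegative term $H_k[Q^{\rm S}]$.

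The main obstacle is the possibility that the adversary never halts. If $\sum_Q Y_{Q,{\rm S}}^k/2^{nQ}<1$, the remaining probability belongs to infinite runs. These contribute nonnegative entropy, or infinite entropy, so the inequality still holds; the argument should restrict to the halting event and note that the remaining contribution is $\ge0$. A related point to check is that removing redundant queries does not affect stopping. This holds because a repeated query can be answered from memory, so the deterministic map from answers to stopping decisions is unchanged.
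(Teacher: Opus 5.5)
Your proof is correct. Its first half matches the paper: every realizable length-$Q$ output path has probability exactly $2^{-nQ}$, because each nonredundant query lands on a fresh, uniformly distributed oracle value. The paper establishes this in Lemmas~\ref{lemma:unqueried_uniformity} and~\ref{lemma:sublemma}, using an explicit induction over the continuation sets $\mathcal Y_{Q,{\rm C}}^k$ where you use a direct chain-rule computation.

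The routes split at the last step. The paper conditions on $Q^{\rm S}$ and uses uniformity on $\mathcal Y_{Q,{\rm S}}^k$ to get $H_k[\mathbf Y\mid Q^{\rm S}=Q]=\log Y_{Q,{\rm S}}^k$. It then discards $H_k[Q^{\rm S}]$ because conditioning cannot increase entropy, which is exactly your closing ``equivalently'' remark. Your primary route instead evaluates the entropy exactly, $H_k[\mathbf Y]=n\log2\,\operatorname{E}[Q^{\rm S}\mid k]$, and only then weakens it termwise via $Y_{Q,{\rm S}}^k\le 2^{nQ}$.

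That identity is stronger than the lemma and would simplify the rest of the paper. It gives $H_k[\mathbf Y]\ge nQ_0\log2\,\Pr[Q^{\rm S}\ge Q_0\mid k]$ at once, and this bound is linear in $q_k$. The Lagrange-multiplier minimization in Lemma~\ref{lemma:entropy_lower_general} and the Jensen step in Lemma~\ref{lemma:averaged_entropy_lower_bound} then become unnecessary, and the negative correction $\bar q\log[\bar q(1-2^{-n})]$ drops out of the final bound. The paper's weaker form in terms of $\log Y_{Q,{\rm S}}^k$ throws away $H_k[Q^{\rm S}]$ and must recover the $nQ_0$ scaling through that optimization, at the cost of this correction.

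The non-halting issue you flag cannot arise here. Since $\Sigma$ is finite and queries are nonredundant, every transcript has length at most $|\Sigma|$, so no probability mass sits on infinite runs.
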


Before proving Lemma~\ref{lemma:entropy_general}, we introduce the notation used in the proof.
Without loss of generality, we assume that the adversary makes at least one nonredundant query.
If the adversary makes no query, the output transcript is empty and contributes no entropy.
As stated in Lemma~\ref{lemma:entropy_general}, repeated queries to the same oracle input may be removed without changing the adversary's transcript information relevant to the proof, and we count only nonredundant queries.

For every $Q\geq 1$, let the stopping rule
\begin{align}
    f_Q(\mathbf y_1,\ldots,\mathbf y_Q,k)
    =
    \begin{cases}
    {\rm S}, & \text{if the adversary stops after the $Q$th query,}\\
    {\rm C}, & \text{if the adversary continues after the $Q$th query}
    \end{cases}
\end{align}
determine whether the adversary stops or continues, given the key $k$ and the output transcript $(\mathbf y_1,\ldots,\mathbf y_Q)$ obtained so far.
This function defines sets of output strings for which the algorithm stops or continues after $Q$ queries.
See Fig.~\ref{fig:tree} for a visualization.

For $Q=1$, define
\begin{align}
    \mathcal Y_{1,{\rm S}}^k
    &=
    \left\{
    \mathbf y_1\in\{0,1\}^n:
    f_1(\mathbf y_1,k)={\rm S}
    \right\}
\end{align}
as the set of one-query output transcripts for which the adversary stops after the first query.
Its complement is
\begin{align}
    \mathcal Y_{1,{\rm C}}^k
    =
    \{0,1\}^n\setminus \mathcal Y_{1,{\rm S}}^k,
\end{align}
the set of one-query output transcripts for which the adversary continues.

For $Q>1$, define recursively the set of output transcripts for which the adversary stops after the $Q$th query by
\begin{align}
    \mathcal Y_{Q,{\rm S}}^k
    &:=
    \left\{
    (\mathbf y_1,\ldots,\mathbf y_Q)
    \in
    \mathcal Y_{Q-1,{\rm C}}^k\times\{0,1\}^n:
    f_Q(\mathbf y_1,\ldots,\mathbf y_Q,k)={\rm S}
    \right\}.
\end{align}
Similarly, the set of output transcripts for which the adversary continues after the $Q$th query is
\begin{align}
    \mathcal Y_{Q,{\rm C}}^k
    &:=
    \left(
    \mathcal Y_{Q-1,{\rm C}}^k\times\{0,1\}^n
    \right)
    \setminus
    \mathcal Y_{Q,{\rm S}}^k .
\end{align}
We write
\begin{align}
    Y_{Q,{\rm S}}^k
    =
    |\mathcal Y_{Q,{\rm S}}^k|,
    \qquad
    Y_{Q,{\rm C}}^k
    =
    |\mathcal Y_{Q,{\rm C}}^k|,
\end{align}
where $|A|$ denotes the cardinality of a finite set $A$.

The recursive definitions above separate the adversary's behavior into two ingredients: the random oracle outputs and the deterministic stopping rules applied to the observed transcript.
The stopping rules only partition already possible transcripts into stopping and continuation sets; they do not change the distribution of the next oracle value.
The probabilistic input needed for the proof is therefore the following uniformity property of a random oracle at inputs that have not previously been queried.

\begin{lemma}[Uniformity of an unqueried oracle value]
\label{lemma:unqueried_uniformity}
Let $H:\Sigma\to\{0,1\}^n$ be uniformly distributed over all functions from $\Sigma$ to $\{0,1\}^n$.
Let $\mathbf x_1,\ldots,\mathbf x_r\in\Sigma$ be distinct inputs, and let $\mathbf y_1,\ldots,\mathbf y_r\in\{0,1\}^n$.
Then, conditioned on arbitrary values $H(\mathbf x_j)=\mathbf y_j$ of $H$ for $j\in\{1,\ldots,r-1\}$ on previously queried distinct inputs, the value of $H$ at any new input $\mathbf x_r$ remains uniformly distributed over $\{0,1\}^n$:
\begin{align}
    \Pr\!\left[
    H(\mathbf x_r)=\mathbf y_r
    \,\middle|\,
    H(\mathbf x_j)=\mathbf y_j\ \forall j=1,\ldots,r-1
    \right]
    =
    2^{-n}.
\end{align}

\end{lemma}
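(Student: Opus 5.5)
The statement concerns only the uniform distribution over functions $\Sigma\to\{0,1\}^n$, so the plan is to count functions directly. A uniformly random $H$ is the same as choosing the values $\{H(\mathbf x)\}_{\mathbf x\in\Sigma}$ independently and uniformly from $\{0,1\}^n$. To see this, note that the set of all functions is the product $\prod_{\mathbf x\in\Sigma}\{0,1\}^n$. The uniform measure on this product, which gives each function probability $2^{-n|\Sigma|}$, is exactly the product of the uniform measures on the factors.

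Given this, I will compute the conditional probability as a ratio of counts.

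The numerator counts functions with $H(\mathbf x_j)=\mathbf y_j$ for all $j=1,\ldots,r$. The inputs $\mathbf x_1,\ldots,\mathbf x_r$ are distinct, so these are $r$ constraints on distinct coordinates of the product. Hence the numerator is $2^{n(|\Sigma|-r)}$.

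The denominator counts functions with $H(\mathbf x_j)=\mathbf y_j$ only for $j\le r-1$. By the same reasoning it is $2^{n(|\Sigma|-r+1)}$, which is nonzero, so the conditioning event has positive probability.

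Taking the ratio gives $2^{-n}$, which is the claimed identity.

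The only point requiring care is distinctness. If $\mathbf x_r$ coincided with some earlier $\mathbf x_j$, the conditional probability would be $0$ or $1$ instead of $2^{-n}$. Distinctness is guaranteed by the hypothesis, and in the application by the removal of redundant queries assumed in Lemma~\ref{lemma:entropy_general}.

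I also want to flag how this lemma will be used. In the adaptive setting, $\mathbf x_r$ is itself a deterministic function of $k$ and $\mathbf y_1,\ldots,\mathbf y_{r-1}$. This causes no difficulty for two reasons. First, the key $k$ is independent of $H$. Second, once we condition on the event that the transcript equals a given $(\mathbf y_1,\ldots,\mathbf y_{r-1})$, the inputs $\mathbf x_1,\ldots,\mathbf x_r$ become fixed distinct points. The lemma then applies pointwise to each such fixed sequence.

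Chaining the lemma over the rounds shows that each length-$Q$ output transcript has probability $2^{-nQ}$. This is the uniformity property that Lemma~\ref{lemma:entropy_general} requires. Since the argument is a direct count, I expect no real obstacle beyond stating the product-measure identification clearly.
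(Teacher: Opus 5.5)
Your proof is correct and is essentially the paper's argument: both count functions with values fixed on distinct inputs, $|E_r| = 2^{n(|\Sigma|-r)}$ and $|E_{r-1}| = 2^{n(|\Sigma|-r+1)}$, and take the ratio to get $2^{-n}$. Your remarks on distinctness and on applying the lemma pointwise to transcript-determined adaptive inputs match how the paper uses it in the proof of Lemma~\ref{lemma:sublemma}.
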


\begin{proof}
Let
\begin{align}
    E_{r-1}
    =
    \{H:H(\mathbf x_j)=\mathbf y_j\ \forall j=1,\ldots,r-1\}
\end{align}
and
\begin{align}
    E_r
    =
    \{H:H(\mathbf x_j)=\mathbf y_j\ \forall j=1,\ldots,r\}.
\end{align}
Since the inputs $\mathbf x_1,\ldots,\mathbf x_r$ are distinct, fixing the values of $H$ on $r-1$ inputs leaves the values of $H$ on the remaining $|\Sigma|-(r-1)$ inputs arbitrary. Hence
\begin{align}
    |E_{r-1}|=(2^n)^{|\Sigma|-(r-1)}.
\end{align}
Similarly, fixing the values of $H$ on all $r$ inputs leaves $|\Sigma|-r$ inputs arbitrary, so
\begin{align}
    |E_r|=(2^n)^{|\Sigma|-r}.
\end{align}
Because $H$ is uniformly distributed over all functions, conditioning on $E_{r-1}$ gives the uniform distribution over the functions in $E_{r-1}$. Therefore,
\begin{align}
    \Pr\!\left[
    H(\mathbf x_r)=\mathbf y_r
    \,\middle|\,
    H(\mathbf x_j)=\mathbf y_j\ \forall j=1,\ldots,r-1
    \right]
    =
    \frac{|E_r|}{|E_{r-1}|}
    =
    \frac{(2^n)^{|\Sigma|-r}}{(2^n)^{|\Sigma|-(r-1)}}
    =
    2^{-n}.
\end{align}
\end{proof}

The entropy $H_k[\mathbf Y]$ is determined by the probability distribution of the output transcript over the stopping sets $\mathcal Y_{Q,{\rm S}}^k$.
The following lemma characterizes the relevant transcript distributions.

\begin{figure}
    \centering
    \includegraphics[width=\linewidth]{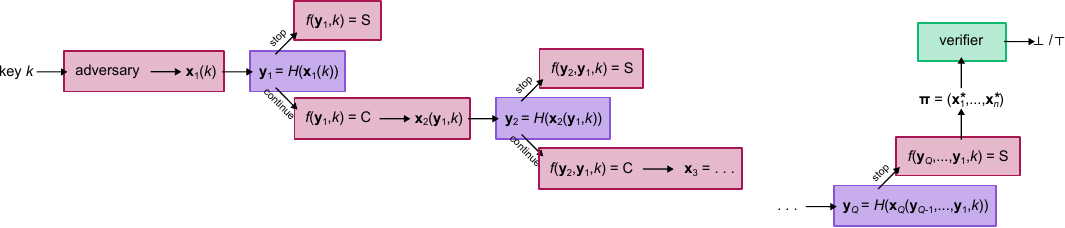}
    \caption{Decision tree for a classical adaptive algorithm.
    The adversary is instantiated with a key $k$.
    Based on $k$, the first input $\mathbf x_1(k)$ is generated and queried to the oracle, producing the output $\mathbf y_1=H(\mathbf x_1(k))$.
    The stopping rule $f_1(\mathbf y_1,k)$ determines whether the adversary stops or continues.
    If it continues, the next query input may depend on the previous output, e.g., $\mathbf x_2(\mathbf y_1,k)$.
    The algorithm proceeds until the first stopping event, after which the adversary outputs a proof string $\pi$ to the verifier.}
    \label{fig:tree}
\end{figure}

\begin{lemma}[Transcript distribution of a deterministic adaptive adversary]
\label{lemma:sublemma}
In the setting of Lemma~\ref{lemma:entropy_general}, let $Q^{\rm S}=Q$ denote the event that the adversary stops after exactly $Q$ nonredundant queries, and let $Q^{\rm C}=Q$ denote the event that the adversary continues after exactly $Q$ nonredundant queries.
Then
\begin{align}
\label{eq:PrQSQC|k_lemma}
    \Pr[Q^{\rm S}=Q|k]
    =
    \frac{Y_{Q,{\rm S}}^k}{2^{nQ}},
    \qquad
    \Pr[Q^{\rm C}=Q|k]
    =
    \frac{Y_{Q,{\rm C}}^k}{2^{nQ}}.
\end{align}
Moreover, conditioned on stopping after $Q$ queries, the output transcript is uniform over $\mathcal Y_{Q,{\rm S}}^k$:
\begin{align}
\label{eq:PrY1...YQ|QSk_lemma}
    \Pr[
    \mathbf y_1,\ldots,\mathbf y_Q
    \mid
    Q^{\rm S}=Q,k
    ]
    =
    \frac{1}{Y_{Q,{\rm S}}^k}
    \delta_{f_Q(\mathbf y_1,\ldots,\mathbf y_Q,k),{\rm S}}
    \prod_{j=1}^{Q-1}
    \delta_{f_j(\mathbf y_1,\ldots,\mathbf y_j,k),{\rm C}}.
\end{align}
Similarly, conditioned on continuing after $Q$ queries, the output transcript is uniform over $\mathcal Y_{Q,{\rm C}}^k$:
\begin{align}
\label{eq:PrY1...YQ|QCk_lemma}
    \Pr[
    \mathbf y_1,\ldots,\mathbf y_Q
    \mid
    Q^{\rm C}=Q,k
    ]
    =
    \frac{1}{Y_{Q,{\rm C}}^k}
    \prod_{j=1}^{Q}
    \delta_{f_j(\mathbf y_1,\ldots,\mathbf y_j,k),{\rm C}}.
\end{align}
\end{lemma}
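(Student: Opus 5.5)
The plan is to compute the joint probability of each individual output transcript by the chain rule, and then read off both the stopping probabilities and the conditional uniformity from a single formula. Fix the key $k$; the adversary is then a deterministic decision tree. Its $j$th query input $\mathbf x_j=\mathbf x_j(\mathbf y_1,\ldots,\mathbf y_{j-1},k)$ is a function of the previous outputs, and it is distinct from $\mathbf x_1,\ldots,\mathbf x_{j-1}$ because redundant queries have been removed.

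\textbf{Step 1: probability of a single path.} For any $(\mathbf y_1,\ldots,\mathbf y_Q)$, the event that the adversary reaches query $Q$ along this path and observes these outputs is
\begin{align}
\bigl\{H(\mathbf x_j)=\mathbf y_j\ \forall j\leq Q\bigr\}\ \cap\ \bigl\{f_j(\mathbf y_1,\ldots,\mathbf y_j,k)={\rm C}\ \forall j<Q\bigr\}.
\end{align}
The second condition is deterministic given the outputs, so it contributes only the indicator $\prod_{j<Q}\delta_{f_j,{\rm C}}$. For the first condition I expand by the chain rule over $j$. At step $j$ I condition on $H(\mathbf x_i)=\mathbf y_i$ for $i<j$; on that event the input $\mathbf x_j$ is a fixed new point. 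Lemma~\ref{lemma:unqueried_uniformity} then gives each factor as $2^{-n}$. Hence
\begin{align}
\Pr[\mathbf y_1,\ldots,\mathbf y_Q\ \text{observed and reached}\mid k]=2^{-nQ}\prod_{j=1}^{Q-1}\delta_{f_j(\mathbf y_1,\ldots,\mathbf y_j,k),{\rm C}}.
\end{align}

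\textbf{Step 2: the stopping and continuation probabilities.} Multiplying by $\delta_{f_Q,{\rm S}}$ gives the probability that the adversary stops with exactly this transcript. Multiplying instead by $\delta_{f_Q,{\rm C}}$ gives the probability that it continues with it. By the recursive definitions, the support of the first product is exactly $\mathcal Y_{Q,{\rm S}}^k$, and the support of the second is exactly $\mathcal Y_{Q,{\rm C}}^k$; this can be checked by a short induction on $Q$. Summing over transcripts therefore yields Eq.~\eqref{eq:PrQSQC|k_lemma}.

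\textbf{Step 3: conditional uniformity.} Dividing the single-path probability by $\Pr[Q^{\rm S}=Q\mid k]$, or by $\Pr[Q^{\rm C}=Q\mid k]$, gives Eqs.~\eqref{eq:PrY1...YQ|QSk_lemma} and \eqref{eq:PrY1...YQ|QCk_lemma}. This requires the relevant count to be nonzero; if it is zero the conditional statement is vacuous.

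\textbf{Main obstacle.} The delicate point is justifying the application of Lemma~\ref{lemma:unqueried_uniformity} when the query points are themselves random, since they depend on $H$ through earlier outputs. I will handle this by conditioning on the full prefix $\mathbf y_1,\ldots,\mathbf y_{j-1}$. Once the prefix is fixed, the inputs $\mathbf x_1,\ldots,\mathbf x_j$ are fixed, so the event $\{H(\mathbf x_i)=\mathbf y_i\ \forall i<j\}$ is an event about $H$ at fixed distinct points, and the lemma applies verbatim. The independence of $k$ from $H$ is used here so that conditioning on $k$ leaves $H$ uniform.
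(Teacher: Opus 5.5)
Your proposal is correct and follows essentially the same argument as the paper. You condition on the output prefix so that the next query is a fixed new input, apply Lemma~\ref{lemma:unqueried_uniformity} to get a factor $2^{-n}$ per query, treat the stopping rules as deterministic indicators that partition the transcripts, and obtain conditional uniformity by Bayes' rule. The only difference is presentational: you write the path probability $2^{-nQ}\prod_{j=1}^{Q-1}\delta_{f_j(\mathbf y_1,\ldots,\mathbf y_j,k),{\rm C}}$ in one chain-rule step, while the paper unrolls the same computation as an induction on $Q$ through $\Pr[Q^{\rm C}=Q-1\mid k]$. Your remark that the conditional statements are vacuous when $Y_{Q,{\rm S}}^k=0$ or $Y_{Q,{\rm C}}^k=0$ is a small point the paper leaves implicit.
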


\begin{proof}
We prove the lemma by induction on the number of nonredundant queries.
Throughout the proof, the key $k$ is fixed.
Since $k$ is sampled independently of the oracle, we have $\Pr[H|k]=\Pr[H]$, and $\Pr[H]$ is the uniform distribution over all functions $H:\Sigma\to\{0,1\}^n$.
There are $(2^n)^{|\Sigma|}$ such functions.

We first prove the base case $Q=1$.
Let $\mathbf x_1(k)$ be the first query.
For any $\mathbf y_1\in\{0,1\}^n$,
\begin{align}
    \Pr[\mathbf y_1|k]
    &=
    \sum_H
    \Pr[\mathbf y_1|H,k]\Pr[H|k] \\
    &=
    2^{-n|\Sigma|}
    \left|
    \{H:H(\mathbf x_1(k))=\mathbf y_1\}
    \right| \\
    &=
    2^{-n}.
\end{align}
Thus, the first output is uniform over $\{0,1\}^n$.
The stopping rule $f_1$ is a deterministic function of $\mathbf y_1$ and $k$.
It does not change the probability of any output before conditioning; it only partitions the uniformly distributed outputs into the stopping set $\mathcal Y_{1,{\rm S}}^k$ and the continuation set $\mathcal Y_{1,{\rm C}}^k$.
Therefore,
\begin{align}
    \Pr[Q^{\rm S}=1|k]
    &=
    \sum_{\mathbf y_1\in\mathcal Y_{1,{\rm S}}^k}
    \Pr[\mathbf y_1|k]
    =
    \frac{Y_{1,{\rm S}}^k}{2^n},\\
    \Pr[Q^{\rm C}=1|k]
    &=
    \sum_{\mathbf y_1\in\mathcal Y_{1,{\rm C}}^k}
    \Pr[\mathbf y_1|k]
    =
    \frac{Y_{1,{\rm C}}^k}{2^n}.
\end{align}
Conditioning a uniform distribution on either part of this deterministic partition leaves the distribution uniform on that part:
\begin{align}
    \Pr[\mathbf y_1|Q^{\rm S}=1,k]
    &=
    \frac{1}{Y_{1,{\rm S}}^k}
    \delta_{f_1(\mathbf y_1,k),{\rm S}},\\
    \Pr[\mathbf y_1|Q^{\rm C}=1,k]
    &=
    \frac{1}{Y_{1,{\rm C}}^k}
    \delta_{f_1(\mathbf y_1,k),{\rm C}}.
\end{align}
This proves the lemma for $Q=1$.

Now assume that the claims hold up to $Q-1$.
We first show that, conditioned on continuing after $Q-1$ queries, the next output is uniform before applying the next stopping rule.
Let
\begin{align}
    \tau_{Q-1}=(\mathbf y_1,\ldots,\mathbf y_{Q-1})\in\mathcal Y_{Q-1,{\rm C}}^k
\end{align}
be any continuation transcript.
The condition $\tau_{Q-1}\in\mathcal Y_{Q-1,{\rm C}}^k$ is equivalent to
\begin{align}
    f_j(\mathbf y_1,\ldots,\mathbf y_j,k)={\rm C}
    \quad\text{for all } j=1,\ldots,Q-1.
\end{align}
These conditions are deterministic predicates of the already observed transcript and the key.
Once the transcript $\tau_{Q-1}$ is fixed, they impose no further constraint on the value of the oracle at any unqueried input.

The next query input is a deterministic function
\begin{align}
    \mathbf x_Q(\tau_{Q-1},k)
    =
    \mathbf x_Q(\mathbf y_1,\ldots,\mathbf y_{Q-1},k)
\end{align}
of the previous transcript and the key.
Because repeated queries have been removed, this input has not appeared among the previous nonredundant queries.
Conditioned on $\tau_{Q-1}$, the values of $H$ on the previously queried inputs are fixed, but by Lemma~\ref{lemma:unqueried_uniformity}, the value of $H$ at the new input $\mathbf x_Q(\tau_{Q-1},k)$ remains uniformly distributed over $\{0,1\}^n$.
Therefore,
\begin{align}
\label{eq:next_output_uniform}
    \Pr[
    \mathbf y_Q
    |
    \tau_{Q-1},
    Q^{\rm C}=Q-1,k
    ]
    =
    2^{-n}.
\end{align}
Equivalently, before applying the $Q$th stopping rule, the extended transcript is uniform over
\begin{align}
    \mathcal Y_{Q-1,{\rm C}}^k\times\{0,1\}^n .
\end{align}
Indeed, using the induction hypothesis for the continuation transcript after $Q-1$ queries together with Eq.~\eqref{eq:next_output_uniform}, we obtain
\begin{align}
    \Pr[
    \mathbf y_1,\ldots,\mathbf y_Q
    |
    Q^{\rm C}=Q-1,k
    ]
    =
    \frac{1}{2^n Y_{Q-1,{\rm C}}^k}
    \prod_{j=1}^{Q-1}
    \delta_{f_j(\mathbf y_1,\ldots,\mathbf y_j,k),{\rm C}} .
\end{align}

The $Q$th stopping rule $f_Q$ is again only a deterministic predicate of the extended transcript $(\mathbf y_1,\ldots,\mathbf y_Q)$ and the key $k$.
It partitions the uniformly distributed set
\begin{align}
    \mathcal Y_{Q-1,{\rm C}}^k\times\{0,1\}^n
\end{align}
into the stopping set $\mathcal Y_{Q,{\rm S}}^k$ and the continuation set $\mathcal Y_{Q,{\rm C}}^k$.
Thus, applying $f_Q$ cannot bias the probabilities within either set; conditioning on stopping or continuing leaves the transcript uniform on the corresponding subset:
\begin{align}
\label{eq:Pr[y1...yQ,QS|QC,k]}
    \Pr[
    \mathbf y_1,\ldots,\mathbf y_Q, Q^{\rm S}=Q
    |
    Q^{\rm C}=Q-1,k
    ]
    = \frac{1}{2^n Y_{Q-1,{\rm C}}^k}
    \delta_{f_Q(\mathbf y_1,\ldots,\mathbf y_Q,k),{\rm S}} \prod_{j=1}^{Q-1}
    \delta_{f_j(\mathbf y_1,\ldots,\mathbf y_j,k),{\rm C}}
\end{align}
and analogously for continuation:
\begin{align}
\label{eq:Pr[y1...yQ,QC|QC,k]}
    \Pr[
    \mathbf y_1,\ldots,\mathbf y_Q, Q^{\rm C}=Q
    |
    Q^{\rm C}=Q-1,k
    ]
    = \frac{1}{2^n Y_{Q-1,{\rm C}}^k}
    \prod_{j=1}^{Q}
    \delta_{f_j(\mathbf y_1,\ldots,\mathbf y_j,k),{\rm C}}.
\end{align}

Quantitatively, for stopping, we have
\begin{align}
    \Pr[Q^{\rm S}=Q|k]
    &=
    \Pr[Q^{\rm C}=Q-1|k]\,
    \Pr[Q^{\rm S}=Q|Q^{\rm C}=Q-1,k] \\
    &\overset{\makebox[0pt]{\hspace{-0.1cm}\footnotesize\text{\eqref{eq:Pr[y1...yQ,QS|QC,k]}}}}{=}
    \frac{Y_{Q-1,{\rm C}}^k}{2^{n(Q-1)}}
    \sum_{\mathbf y_1,\ldots,\mathbf y_Q}
    \frac{1}{2^n Y_{Q-1,{\rm C}}^k}
    \prod_{j=1}^{Q-1}
    \delta_{f_j(\mathbf y_1,\ldots,\mathbf y_j,k),{\rm C}}\,
    \delta_{f_Q(\mathbf y_1,\ldots,\mathbf y_Q,k),{\rm S}} \\
    &=
    \frac{Y_{Q,{\rm S}}^k}{2^{nQ}}.
\end{align}
Similarly, for continuation,
\begin{align}
    \Pr[Q^{\rm C}=Q|k]
    &=
    \Pr[Q^{\rm C}=Q-1|k]\,
    \Pr[Q^{\rm C}=Q|Q^{\rm C}=Q-1,k] \\
    &\overset{\makebox[0pt]{\hspace{-0.1cm}\footnotesize\text{\eqref{eq:Pr[y1...yQ,QC|QC,k]}}}}{=}
    \frac{Y_{Q-1,{\rm C}}^k}{2^{n(Q-1)}}
    \sum_{\mathbf y_1,\ldots,\mathbf y_Q}
    \frac{1}{2^n Y_{Q-1,{\rm C}}^k}
    \prod_{j=1}^{Q}
    \delta_{f_j(\mathbf y_1,\ldots,\mathbf y_j,k),{\rm C}} \\
    &=
    \frac{Y_{Q,{\rm C}}^k}{2^{nQ}}.
\end{align}
This proves Eq.~\eqref{eq:PrQSQC|k_lemma} for $Q$.

It remains to prove the conditional uniformity statements.
For stopping after $Q$ queries, Bayes' rule, the calculation above and Eq.~\eqref{eq:Pr[y1...yQ,QS|QC,k]} give
\begin{align}
    \Pr[
    \mathbf y_1,\ldots,\mathbf y_Q
    |
    Q^{\rm S}=Q,k
    ]
    &=
    \frac{
    \Pr[
    \mathbf y_1,\ldots,\mathbf y_Q,Q^{\rm S}=Q|k
    ]
    }{
    \Pr[Q^{\rm S}=Q|k]
    }\\
    &=
    \frac{
    2^{-nQ}
    \prod_{j=1}^{Q-1}
    \delta_{f_j(\mathbf y_1,\ldots,\mathbf y_j,k),{\rm C}}\,
    \delta_{f_Q(\mathbf y_1,\ldots,\mathbf y_Q,k),{\rm S}}
    }{
    Y_{Q,{\rm S}}^k 2^{-nQ}
    }\\
    &=
    \frac{1}{Y_{Q,{\rm S}}^k}
    \prod_{j=1}^{Q-1}
    \delta_{f_j(\mathbf y_1,\ldots,\mathbf y_j,k),{\rm C}}\,
    \delta_{f_Q(\mathbf y_1,\ldots,\mathbf y_Q,k),{\rm S}} .
\end{align}
This is Eq.~\eqref{eq:PrY1...YQ|QSk_lemma}.
The proof for continuation is identical and relies on Eq.~\eqref{eq:Pr[y1...yQ,QC|QC,k]}:
\begin{align}
    \Pr[
    \mathbf y_1,\ldots,\mathbf y_Q
    |
    Q^{\rm C}=Q,k
    ]
    &=
    \frac{
    2^{-nQ}
    \prod_{j=1}^{Q}
    \delta_{f_j(\mathbf y_1,\ldots,\mathbf y_j,k),{\rm C}}
    }{
    Y_{Q,{\rm C}}^k 2^{-nQ}
    }\\
    &=
    \frac{1}{Y_{Q,{\rm C}}^k}
    \prod_{j=1}^{Q}
    \delta_{f_j(\mathbf y_1,\ldots,\mathbf y_j,k),{\rm C}},
\end{align}
which is Eq.~\eqref{eq:PrY1...YQ|QCk_lemma}.
This completes the induction and hence the proof of Lemma~\ref{lemma:sublemma}.
\end{proof}

Given Lemma~\ref{lemma:sublemma}, we can now prove Lemma~\ref{lemma:entropy_general}.

\begin{proof}[Proof of Lemma~\ref{lemma:entropy_general}]
We first evaluate the entropy of the output transcript conditioned on the event that the adversary stops after exactly $Q$ queries.
Equation~\eqref{eq:PrY1...YQ|QSk_lemma} of Lemma~\ref{lemma:sublemma} shows that, conditioned on $Q^{\rm S}=Q$, the output transcript
\begin{align}
    (\mathbf y_1,\ldots,\mathbf y_Q)\in\mathcal Y_{Q,{\rm S}}^k
\end{align}
is uniformly distributed over $\mathcal Y_{Q,{\rm S}}^k$.
Therefore,
\begin{align}
    H_k[\mathbf Y\mid Q^{\rm S}=Q]
    &=
    -\sum_{\mathbf y_1,\ldots,\mathbf y_Q}
    \Pr[\mathbf y_1,\ldots,\mathbf y_Q\mid Q^{\rm S}=Q,k]
    \log\!\left[
    \Pr[\mathbf y_1,\ldots,\mathbf y_Q\mid Q^{\rm S}=Q,k]
    \right] \\
    &=
    \log Y_{Q,{\rm S}}^k .
\end{align}
Using the identity
\begin{align}
    H[A\mid B]=\sum_b \Pr[B=b]H[A\mid B=b],
\end{align}
we obtain
\begin{align}
    H_k[\mathbf Y\mid Q^{\rm S}]
    &=
    \sum_{q\geq 1}
    \Pr[Q^{\rm S}=q\mid k]\,
    H_k[\mathbf Y\mid Q^{\rm S}=q] \\
    &=
    \sum_{q\geq 1}
    \frac{Y_{Q,{\rm S}}^k}{2^{nq}}
    \log Y_{Q,{\rm S}}^k,
\end{align}
where we used Eq.~\eqref{eq:PrQSQC|k_lemma}.
Finally, conditioning cannot increase Shannon entropy, so
\begin{align}
    H_k[\mathbf Y]\geq H_k[\mathbf Y\mid Q^{\rm S}].
\end{align}
Thus,
\begin{align}
    H_k[\mathbf Y]
    \geq
    \sum_{q\geq 1}
    \frac{Y_{Q,{\rm S}}^k}{2^{nq}}
    \log Y_{Q,{\rm S}}^k .
\end{align}
This proves Lemma~\ref{lemma:entropy_general}.
\end{proof}

\subsection{\label{SM:proof2}Proof of Eq.~\eqref{eq:HkY_main_lower} in the main text}

In this section, we prove Eq.~\eqref{eq:HkY_main_lower} from the main text.
We state the result in a slightly more detailed form as Lemma~\ref{lemma:entropy_lower_general}.

\begin{lemma}[General entropy lower bound]
\label{lemma:entropy_lower_general}
Consider the setting of Lemma~\ref{lemma:entropy_general}.
Let $Q^*>0$ and let $0<q<1$.
Set
\begin{align}
    Q_0=\lceil Q^*\rceil .
\end{align}
If
\begin{align}
    \Pr[Q^{\rm S}\geq Q^*\mid k]\geq q,
\end{align}
then
\begin{align}
    H_k[\mathbf Y]
    \geq
    q n Q_0\log[2]
    +
    q\log[q(1-2^{-n})].
\end{align}
In particular, since $Q_0\geq Q^*$,
\begin{align}
    H_k[\mathbf Y]
    \geq
    q n Q^*\log[2]
    +
    q\log[q(1-2^{-n})].
\end{align}
\end{lemma}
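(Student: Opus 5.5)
Starting from Lemma~\ref{lemma:entropy_general}, we have $H_k[\mathbf Y]\geq\sum_{Q\geq1}p_Q\log Y_{Q,{\rm S}}^k$, where we write $p_Q=\Pr[Q^{\rm S}=Q\mid k]=Y_{Q,{\rm S}}^k/2^{nQ}$ by Lemma~\ref{lemma:sublemma}. Rewriting $\log Y_{Q,{\rm S}}^k=nQ\log 2+\log p_Q$, the bound becomes
\begin{align}
    H_k[\mathbf Y]\geq\sum_{Q\geq1}p_Q\bigl(nQ\log 2+\log p_Q\bigr).
\end{align}
Every term with $p_Q>0$ is nonnegative, since $Y_{Q,{\rm S}}^k\geq1$. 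I will therefore drop all terms with $Q<Q_0$. Because $Q$ is an integer, the event $Q^{\rm S}\geq Q^*$ coincides with $Q^{\rm S}\geq Q_0$. The task thus reduces to minimizing $\sum_{Q\geq Q_0}p_Q(nQ\log2+\log p_Q)$ subject to $\sum_{Q\geq Q_0}p_Q=:s\geq q$ and the feasibility constraints $p_Q\leq 1$ (indeed $Y_{Q,{\rm S}}^k\leq 2^{nQ}$).

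I will handle the tail with a Gibbs-type variational bound. Set $r_Q=(1-2^{-n})2^{-n(Q-Q_0)}$ for $Q\geq Q_0$; this is a probability distribution on $\{Q_0,Q_0+1,\dots\}$, because $\sum_{j\geq0}2^{-nj}=(1-2^{-n})^{-1}$. Then
\begin{align}
    \sum_{Q\geq Q_0}p_Q\log\frac{p_Q}{r_Q}\geq s\log s
\end{align}
by the log-sum inequality, or equivalently by nonnegativity of relative entropy after normalizing $p$ by $s$. Expanding $\log r_Q=\log(1-2^{-n})-n(Q-Q_0)\log2$ gives
\begin{align}
    \sum_{Q\geq Q_0}p_Q\bigl(nQ\log2+\log p_Q\bigr)\geq s\,nQ_0\log2+s\log\bigl[s(1-2^{-n})\bigr].
\end{align}
This is the desired expression with $s$ in place of $q$.

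It remains to pass from $s$ to $q$, i.e.\ to show that $g(s)=s\bigl(nQ_0\log2+\log[s(1-2^{-n})]\bigr)$ is nondecreasing on $[q,1]$. Its derivative is $g'(s)=nQ_0\log2+\log[s(1-2^{-n})]+1$, which is increasing in $s$, so it suffices to check $g'(q)\geq0$. This is where I expect the only real obstacle. When $q$ is tiny, $g'(q)$ can be negative, and then the bound would have to use the fact that $s\leq1$ together with convexity. The fallback I would try first is to use the other bound directly: when $g'(q)<0$, we have $\log[q(1-2^{-n})]<-nQ_0\log 2-1$, so the claimed right-hand side $g(q)$ is negative and the statement holds trivially because $H_k[\mathbf Y]\geq0$. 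When $g'(q)\geq0$, monotonicity gives $g(s)\geq g(q)$. This completes the bound with $Q_0$. The version with $Q^*$ then follows from $Q_0\geq Q^*$ and $q\geq0$.
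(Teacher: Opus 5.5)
Your proof is correct. Its outer structure is the same as the paper's. You discard the nonnegative terms with $Q<Q_0$, bound the tail for a fixed mass $s=\sum_{Q\geq Q_0}p_Q\geq q$, and pass from $s$ to $q$ with exactly the paper's case split: monotonicity of $g$ when $q\geq 2^{-nQ_0}/(e(1-2^{-n}))$, and $g(q)<0\leq H_k[\mathbf Y]$ otherwise.

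The real difference is the middle step. The paper relaxes the counts $Y_{Q,{\rm S}}^k$ to real variables and solves a Lagrange problem. This yields the stationary point $Y_Q=p\,2^{nQ_0}(1-2^{-n})$, constant in $Q$. Strictly speaking, one must still argue that this interior stationary point is a global minimum over an infinite-dimensional feasible set that includes boundary points with $Y_Q=0$. That holds by convexity of $Y\mapsto Y\log Y$, but the paper does not say so.

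Your log-sum (Gibbs) inequality against the geometric reference distribution $r_Q=(1-2^{-n})2^{-n(Q-Q_0)}$ gives the same bound as a direct global certificate. It needs no relaxation, stationarity condition, or boundary analysis. Its equality case $p_Q=s\,r_Q$ is exactly the paper's optimizer, since then $Y_{Q,{\rm S}}^k=p_Q2^{nQ}=s(1-2^{-n})2^{nQ_0}$. The Lagrange route shows how the extremal stopping distribution is found; yours is shorter and fully rigorous.

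To make the infinite sum airtight, note that termwise $p_Q\log(p_Q/r_Q)=p_Q\log Y_{Q,{\rm S}}^k-p_Q\bigl(nQ_0\log 2+\log(1-2^{-n})\bigr)$. So you never need to split off $\sum_Q p_Q\,nQ$ and $\sum_Q p_Q\log p_Q$ separately, and if the tail sum is infinite there is nothing to prove.
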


\begin{proof}
Since $Q^{\rm S}$ is an integer-valued random variable and $Q_0=\lceil Q^*\rceil$, the event $Q^{\rm S}\geq Q^*$ is equivalent to the event $Q^{\rm S}\geq Q_0$.
Using Eq.~\eqref{eq:PrQSQC|k_lemma} from Lemma~\ref{lemma:sublemma}, the assumed query lower bound can therefore be written as
\begin{align}
\label{eq:success_condition}
    \Pr[Q^{\rm S}\geq Q_0\mid k]
    &=
    \sum_{Q\geq Q_0}
    \Pr[Q^{\rm S}=Q\mid k] \\
    &=
    \sum_{Q\geq Q_0}
    \frac{Y_{Q,{\rm S}}^k}{2^{nQ}}
    \geq q .
\end{align}
By Lemma~\ref{lemma:entropy_general},
\begin{align}
    H_k[\mathbf Y]
    \geq
    \sum_{Q\geq 1}
    \frac{Y_{Q,{\rm S}}^k}{2^{nQ}}
    \log Y_{Q,{\rm S}}^k .
\end{align}
Dropping the nonnegative terms with $Q<Q_0$ gives
\begin{align}
\label{eq:H[Y | key k]_splitsum}
    H_k[\mathbf Y]
    \geq
    \sum_{Q\geq Q_0}
    \frac{Y_{Q,{\rm S}}^k}{2^{nQ}}
    \log Y_{Q,{\rm S}}^k ,
\end{align}
where the convention $0\log[0]=0$ is used.

We lower bound the right-hand side under the constraint~\eqref{eq:success_condition}.
Relaxing the integer variables $Y_{Q,{\rm S}}^k\in\mathbb Z_{\geq0}$ to real variables $Y_{Q,{\rm S}}^k\in\mathbb R_{\geq0}$ can only decrease the minimum, and hence gives a valid, possibly weaker, lower bound.
Fix
\begin{align}
\label{eq:lagrange_condition}
    \sum_{Q\geq Q_0}
    \frac{Y_{Q,{\rm S}}^k}{2^{nQ}}
    =
    p
\end{align}
with $p\geq q$.
For notational simplicity, write $Y_Q:=Y_{Q,{\rm S}}^k$.
We minimize
\begin{align}
    \sum_{Q\geq Q_0}
    \frac{Y_Q}{2^{nQ}}\log Y_Q
\end{align}
subject to Eq.~\eqref{eq:lagrange_condition}.
To this end, we minimize the Lagrange functional with multiplier $\mu$, given by
\begin{align}
    \mathcal L[\{Y_Q\}_Q,\mu]
    =
    \sum_{Q\geq Q_0}
    \frac{Y_Q}{2^{nQ}}\log Y_Q
    -
    \mu
    \left(
    \sum_{Q\geq Q_0}
    \frac{Y_Q}{2^{nQ}}
    -p
    \right).
\end{align}
For an interior optimum with $Y_Q>0$, differentiating with respect to $Y_Q$ gives
\begin{align}
    0
    =
    \frac{\partial \mathcal L}{\partial Y_Q}
    =
    2^{-nQ}(\log Y_Q+1-\mu),
\end{align}
and hence
\begin{align}
    Y_Q=e^{\mu-1}
\end{align}
is independent of $Q$.
Using the constraint~\eqref{eq:lagrange_condition}, we obtain
\begin{align}
    p
    &=
    \sum_{Q\geq Q_0}
    \frac{e^{\mu-1}}{2^{nQ}} \\
    &=
    e^{\mu-1}
    \frac{2^{-nQ_0}}{1-2^{-n}} .
\end{align}
Thus
\begin{align}
    Y_Q
    =
    e^{\mu-1}
    =
    p\,2^{nQ_0}(1-2^{-n}).
\end{align}
Substituting this value into Eq.~\eqref{eq:H[Y | key k]_splitsum} yields
\begin{align}
\label{eq:lower_bound_p}
    H_k[\mathbf Y]
    \geq
    p\log\!\left[
    p\,2^{nQ_0}(1-2^{-n})
    \right].
\end{align}

It remains to minimize this expression over $p\geq q$.
Let
\begin{align}
    g(p)
    =
    p\log\!\left[
    p\,2^{nQ_0}(1-2^{-n})
    \right].
\end{align}
The function $g$ is monotonically increasing whenever
\begin{align}
    p
    \geq
    \frac{2^{-nQ_0}}{e(1-2^{-n})}.
\end{align}
If
\begin{align}
    q
    \geq
    \frac{2^{-nQ_0}}{e(1-2^{-n})},
\end{align}
then the minimum of $g(p)$ over $p\geq q$ is attained at $p=q$, and Eq.~\eqref{eq:lower_bound_p} gives
\begin{align}
    H_k[\mathbf Y]
    \geq
    q\log\!\left[
    q\,2^{nQ_0}(1-2^{-n})
    \right].
\end{align}
If instead
\begin{align}
    q
    <
    \frac{2^{-nQ_0}}{e(1-2^{-n})},
\end{align}
then
\begin{align}
    q\log\!\left[
    q\,2^{nQ_0}(1-2^{-n})
    \right]<0.
\end{align}
Since entropy is nonnegative, $H_k[\mathbf Y]\geq0$, the same lower bound also holds in this case.
Consequently, for all $0<q<1$,
\begin{align}
    H_k[\mathbf Y]
    &\geq
    q\log\!\left[
    q\,2^{nQ_0}(1-2^{-n})
    \right] \\
    &=
    q n Q_0\log[2]
    +
    q\log[q(1-2^{-n})].
\end{align}
This proves the first claimed bound.
Since $Q_0\geq Q^*$, we further have
\begin{align}
    q n Q_0\log[2]
    +
    q\log[q(1-2^{-n})]
    \geq
    q n Q^*\log[2]
    +
    q\log[q(1-2^{-n})],
\end{align}
which proves the second bound.
\end{proof}

\subsection{\label{SM:proof_lowerbound}Proof of Eq.~\eqref{eq:explicit_entropy_schematic} in the main text}

To prove Eq.~\eqref{eq:explicit_entropy_schematic} in the main text, we combine the entropy bounds from Secs.~\ref{SM:proof1} and~\ref{SM:proof2} with the averaged soundness statement of the Yamakawa--Zhandry construction.
The point of the argument is that no fixed-key version of soundness is required.
Soundness averaged over the random oracle and the key already implies that any classical adversary with constant overall success probability must make at least the soundness query threshold with nonzero probability over the same randomness.
This averaged stopping-time lower bound, together with Jensen's inequality, yields the finite-size entropy bound used in the main text.

\begin{lemma}[Averaged minimal-query bound]
\label{lemma:averaged_minimal_queries_bound}
Let $\Pi$ be a $(Q_0,\varepsilon)$-sound proof of quantumness in the CROM.
Suppose that a keyed classical adversary $\mathcal A$ provides a correct proof with success probability at least $0<\Delta<1$, where the probability is taken over the random oracle and the key:
\begin{align}
    \Pr_{H,k}[{\rm succ}]\geq \Delta .
\end{align}
Let $Q^{\rm S}$ denote the number of nonredundant oracle queries made before $\mathcal A$ stops.
Then
\begin{align}
\label{eq:averaged_query_lower_bound}
    \Pr_{H,k}[Q^{\rm S}\geq Q_0]
    \geq
    \Delta-\varepsilon .
\end{align}
\end{lemma}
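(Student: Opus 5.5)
We argue by a truncation reduction to the soundness of $\Pi$ from Definition~\ref{def:soundness}. From $\mathcal A$ we build a bounded-query adversary $\mathcal A'$. It runs $\mathcal A^H(1^\lambda,k)$ step by step with the same key (and the same private randomness, which we have absorbed into $k$). It keeps a table of the answers it has already received, so repeated queries are served from memory and only nonredundant queries reach the oracle. If $\mathcal A$ stops before issuing a $Q_0$th nonredundant query, $\mathcal A'$ outputs the same proof $\pi$. Otherwise, at the moment $\mathcal A$ would issue its $Q_0$th nonredundant query, $\mathcal A'$ halts and outputs a fixed dummy string, for instance one that is rejected. By construction $\mathcal A'$ makes at most $Q_0-1\le Q_0$ classical queries and has unbounded computation time. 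It is therefore covered by the $(Q_0,\varepsilon)$-soundness assumption, which gives $\Pr_{H,k}[\mathcal A'\ \text{succeeds}]\le\varepsilon$.

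Next we couple the two runs on the same $(H,k)$. On the event $\{Q^{\rm S}<Q_0\}$, every query $\mathcal A$ makes is answered identically in both runs. Hence $\mathcal A'$ reproduces the full run of $\mathcal A$ and outputs the same $\pi$, and the verifier, which depends only on $(H,k,\pi)$, accepts in one run exactly when it accepts in the other. This yields
\begin{align}
    \Pr_{H,k}[{\rm succ}\wedge Q^{\rm S}<Q_0]
    \le
    \Pr_{H,k}[\mathcal A'\ \text{succeeds}]
    \le
    \varepsilon .
\end{align}
Splitting the success event according to whether $Q^{\rm S}<Q_0$ or $Q^{\rm S}\ge Q_0$, we obtain
\begin{align}
    \Delta\le\Pr_{H,k}[{\rm succ}]
    \le
    \varepsilon+\Pr_{H,k}[{\rm succ}\wedge Q^{\rm S}\ge Q_0]
    \le
    \varepsilon+\Pr_{H,k}[Q^{\rm S}\ge Q_0],
\end{align}
which rearranges to Eq.~\eqref{eq:averaged_query_lower_bound}.

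The main point needing care is the bookkeeping between the two query-counting conventions, not any probabilistic estimate. Soundness counts raw classical queries, while $Q^{\rm S}$ counts nonredundant ones. The memoization step above ensures that $\mathcal A'$'s raw query count equals $\mathcal A$'s nonredundant count, so removing redundant queries leaves the transcript and the output unchanged. We must also be precise about where the truncation happens: halting just before the $Q_0$th nonredundant query gives at most $Q_0-1$ queries, and the complementary event is exactly $Q^{\rm S}\ge Q_0$. Finally, if $\mathcal A$ might never stop, that event is included in $Q^{\rm S}\ge Q_0$, so the inequality still holds. No fixed-key soundness is needed, because both probabilities are taken over the same joint distribution of $(H,k)$.
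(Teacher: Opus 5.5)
Your proposal is correct and is essentially the paper's own truncation argument: abort $\mathcal A$ when it attempts its $Q_0$th nonredundant query, apply $(Q_0,\varepsilon)$-soundness to get $\Pr_{H,k}[{\rm succ}\wedge Q^{\rm S}<Q_0]\le\varepsilon$, and rearrange. Two details make your version slightly more careful than the paper's without changing the proof. Your explicit memoization justifies passing from nonredundant to raw query counts. You also use an inequality, rather than an exact identification, for the truncated adversary's success probability.
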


\begin{proof}
Consider the truncated adversary $\mathcal A_{<Q_0}$ that simulates $\mathcal A$ but aborts and outputs failure if $\mathcal A$ attempts to make the $Q_0$th nonredundant oracle query.
Thus, $\mathcal A_{<Q_0}$ makes at most $Q_0-1$ nonredundant queries, and hence at most $Q_0$ oracle queries.
By $(Q_0,\varepsilon)$-soundness,
\begin{align}
    \Pr_{H,k}[{\rm succ}\ \text{by}\ \mathcal A_{<Q_0}]
    \leq
    \varepsilon .
\end{align}
The truncated adversary succeeds exactly on those runs in which $\mathcal A$ succeeds and stops before making $Q_0$ nonredundant queries.
Therefore,
\begin{align}
    \Pr_{H,k}[{\rm succ}\wedge Q^{\rm S}<Q_0]
    \leq
    \varepsilon .
\end{align}
Using $\Pr_{H,k}[{\rm succ}]\geq\Delta$, we obtain
\begin{align}
    \Delta
    &\leq
    \Pr_{H,k}[{\rm succ}]\\
    &=
    \Pr_{H,k}[{\rm succ}\wedge Q^{\rm S}<Q_0]
    +
    \Pr_{H,k}[{\rm succ}\wedge Q^{\rm S}\geq Q_0]\\
    &\leq
    \varepsilon
    +
    \Pr_{H,k}[Q^{\rm S}\geq Q_0].
\end{align}
Rearranging proves Eq.~\eqref{eq:averaged_query_lower_bound}.
\end{proof}

The next lemma converts this averaged stopping-time constraint into an averaged entropy lower bound.
It is the averaged counterpart of Lemma~\ref{lemma:entropy_lower_general}.

\begin{lemma}[Entropy lower bound from an averaged stopping-time constraint]
\label{lemma:averaged_entropy_lower_bound}
Consider the setting of Lemma~\ref{lemma:entropy_general}.
Let $Q_0\geq 1$ be an integer and define
\begin{align}
    \bar q
    :=
    \Pr_{H,k}[Q^{\rm S}\geq Q_0].
\end{align}
Then
\begin{align}
\label{eq:averaged_entropy_lower_bound}
    H[\mathbf Y|k]
    \geq
    \bar q\, nQ_0\log[2]
    +
    \bar q\log[\bar q(1-2^{-n})],
\end{align}
where the right-hand side is interpreted by continuity at $\bar q=0$.
\end{lemma}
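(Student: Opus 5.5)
The plan is to apply the fixed-key bound pointwise in $k$ and then average over the key with Jensen's inequality. For each key $k$ in the support of the key distribution, set $q_k:=\Pr[Q^{\rm S}\geq Q_0\mid k]$. The averaged quantity is $\bar q=\sum_k\Pr[k]\,q_k$, because $k$ is sampled independently of $H$ and $\Pr_{H,k}$ is the product measure. The conditional entropy likewise decomposes as $H[\mathbf Y|k]=\sum_k\Pr[k]\,H_k[\mathbf Y]$. So it suffices to bound each $H_k[\mathbf Y]$ by a function of $q_k$ and then push the average inside.

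\emph{Per-key bound.} Define
\begin{align}
    \phi(q):=q\,nQ_0\log[2]+q\log[q(1-2^{-n})],
\end{align}
with $\phi(0)=0$ by continuity. For $0<q_k<1$, Lemma~\ref{lemma:entropy_lower_general} applied with $Q^*=Q_0$ and $q=q_k$ gives $H_k[\mathbf Y]\geq\phi(q_k)$. Two edge cases must be handled separately, since that lemma assumes $0<q<1$. If $q_k=0$, then $\phi(0)=0\leq H_k[\mathbf Y]$ by nonnegativity of entropy. If $q_k=1$, the proof of Lemma~\ref{lemma:entropy_lower_general} still goes through verbatim: the Lagrange minimization with constraint $p\geq q$ and the monotonicity argument for $g(p)$ never use $q<1$. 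Alternatively, one can take the limit $q\uparrow 1$, using that $\Pr[Q^{\rm S}\geq Q_0\mid k]\geq q$ holds for every $q<1$ and that $\phi$ is continuous. Either way, $H_k[\mathbf Y]\geq\phi(q_k)$ for every $k$.

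\emph{Averaging.} On $(0,\infty)$ we have
\begin{align}
    \phi'(q)=nQ_0\log[2]+\log[q(1-2^{-n})]+1,
    \qquad
    \phi''(q)=1/q>0,
\end{align}
so $\phi$ is convex on $[0,1]$; the extension to $q=0$ is continuous and preserves convexity. Jensen's inequality then gives
\begin{align}
    H[\mathbf Y|k]=\sum_k\Pr[k]\,H_k[\mathbf Y]\geq\sum_k\Pr[k]\,\phi(q_k)\geq\phi\!\Big(\sum_k\Pr[k]\,q_k\Big)=\phi(\bar q),
\end{align}
which is Eq.~\eqref{eq:averaged_entropy_lower_bound}. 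If the key space is countably infinite, Jensen's inequality still applies to the probability distribution on $[0,1]$ induced by $q_k$, since $\phi$ is bounded and continuous there.

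The main point requiring care is not the convexity calculation but the identification $\bar q=\mathbb E_k[q_k]$. This needs the conditional probabilities of Lemma~\ref{lemma:sublemma} to be exactly the probabilities under $\Pr[H\mid k]=\Pr[H]$, which holds by independence of the key and the oracle. It also needs the private randomness of the adversary to have been absorbed into $k$, so that for each fixed $k$ the adversary is deterministic, as required by the setting of Lemma~\ref{lemma:entropy_general}. Once these are checked, together with the endpoint cases $q_k\in\{0,1\}$ above, the lemma follows.
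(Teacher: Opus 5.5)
Your proposal is correct and follows essentially the same route as the paper: apply Lemma~\ref{lemma:entropy_lower_general} key by key with $q=q_k$, use $\phi''(q)=1/q>0$ for convexity, and conclude by Jensen's inequality together with $\sum_k\Pr[k]\,q_k=\bar q$. Your explicit handling of the endpoints $q_k\in\{0,1\}$, which that lemma formally excludes, tightens a point the paper treats only by convention. One small correction: the identity $\bar q=\mathbb E_k[q_k]$ is just the law of total probability. Independence of $k$ and $H$ is actually used inside the per-key bound, not in this identity.
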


\begin{proof}
For each fixed key $k$, define
\begin{align}
    q_k
    :=
    \Pr[Q^{\rm S}\geq Q_0\mid k].
\end{align}
Applying Lemma~\ref{lemma:entropy_lower_general} with $q=q_k$ gives
\begin{align}
    H_k[\mathbf Y]
    \geq
    q_k nQ_0\log[2]
    +
    q_k\log[q_k(1-2^{-n})],
\end{align}
with the convention that the right-hand side is $0$ at $q_k=0$.
Averaging over keys yields
\begin{align}
    H[\mathbf Y|k]
    &=
    \sum_k \Pr[k]H_k[\mathbf Y]\\
    &\geq
    \sum_k\Pr[k]
    \left(
    q_k nQ_0\log[2]
    +
    q_k\log[q_k(1-2^{-n})]
    \right).
\end{align}
Let
\begin{align}
    g(q)
    :=
    q nQ_0\log[2]
    +
    q\log[q(1-2^{-n})],
    \qquad
    g(0):=0.
\end{align}
For $q>0$, we have
\begin{align}
    g''(q)=\frac{1}{q}>0,
\end{align}
so $g$ is convex on $q\geq0$ by continuity at $q=0$.
By Jensen's inequality,
\begin{align}
    \sum_k\Pr[k]g(q_k)
    \geq
    g\!\left(\sum_k\Pr[k]q_k\right).
\end{align}
Finally,
\begin{align}
    \sum_k\Pr[k]q_k
    =
    \Pr_{H,k}[Q^{\rm S}\geq Q_0]
    =
    \bar q.
\end{align}
This proves Eq.~\eqref{eq:averaged_entropy_lower_bound}.
\end{proof}

We now apply these general statements to the Yamakawa--Zhandry construction.

\begin{theorem}[Classical entropy lower bound for solving the Yamakawa--Zhandry problem]
\label{thm:entropy_lowerbound}
Let $\Pi$ be the proof-of-quantumness protocol from the Yamakawa--Zhandry construction~\cite{Yamakawa2024}.
Let $H:\Sigma\to\{0,1\}^n$ be a uniformly random oracle, and let $\mathcal A$ be a keyed classical adversary with unbounded computational time, unbounded memory, and an unbounded number of oracle queries.
Suppose that $\mathcal A$ succeeds in providing a valid proof with probability at least $\Delta>0$, independent of $\lambda$:
\begin{align}
\label{eq:case_weak}
    \Pr_{H,k}
    \left[
    \mathsf{Verify}^H(1^\lambda,k,\pi)=\top
    :
    \pi\xleftarrow{\$}\mathcal A^H(1^\lambda,k)
    \right]
    \geq
    \Delta .
\end{align}
Then, for sufficiently large $\lambda$, the entropy in the adversary's classical memory is lower bounded by
\begin{align}
\label{eq:explicit_entropy_schematic_sm}
    H[\mathbf X,\mathbf Y]
    \geq
    (\Delta-\varepsilon(\lambda))
    \left(
    (2^{\lfloor\log_2\lambda\rfloor}-1)2^{\lambda^c}\log[2]
    +
    \log\!\left[
    (\Delta-\varepsilon(\lambda))
    \left(1-2^{-(2^{\lfloor\log_2\lambda\rfloor}-1)}\right)
    \right]
    \right),
\end{align}
where
\begin{align}
\label{eq:eps(lambda)_Explicit}
    \varepsilon(\lambda)
    =
    L2^{-\lfloor\zeta n\rfloor}
    =
    \lambda^{2\lambda^{c'}}
    2^{-\lfloor\zeta(2^{\lfloor\log_2\lambda\rfloor}-1)\rfloor}
    =
    2^{2(\log_2\lambda)\lambda^{c'}
    -
    \lfloor\zeta(2^{\lfloor\log_2\lambda\rfloor}-1)\rfloor}.
\end{align}
In particular,
\begin{align}
    H[\mathbf X,\mathbf Y]
    \geq
    \Theta(\lambda 2^{\lambda^c}).
\end{align}
\end{theorem}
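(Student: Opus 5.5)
The proof chains the three lemmas of this appendix, starting with a reduction to keyed deterministic adversaries. Any private randomness of $\mathcal A$ is absorbed into the key $k$. This keeps $k$ independent of $H$ and leaves the success probability over $(H,k)$ unchanged. Repeated queries to the same input are then removed without changing the output or the success event, so Lemmas~\ref{lemma:entropy_general}--\ref{lemma:averaged_entropy_lower_bound} apply to the resulting keyed deterministic adversary with nonredundant transcript $(\mathbf X,\mathbf Y)$.

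Next, set $Q_0:=\lceil 2^{\lambda^c}\rceil$. Lemma~\ref{lemma:soundness} gives $(2^{\lambda^c},\varepsilon(\lambda))$-soundness with $\varepsilon(\lambda)=L2^{-\lfloor\zeta n\rfloor}$. The truncation argument of Lemma~\ref{lemma:averaged_minimal_queries_bound} then yields
\begin{align}
\bar q=\Pr_{H,k}[Q^{\rm S}\geq Q_0]\geq \Delta-\varepsilon(\lambda).
\end{align}
I must check here that the truncated adversary, which makes at most $Q_0-1$ nonredundant queries, respects the query budget $2^{\lambda^c}$. If $2^{\lambda^c}$ is not an integer, $Q_0-1<2^{\lambda^c}$, so this holds.

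Lemma~\ref{lemma:averaged_entropy_lower_bound} then gives $H[\mathbf Y|k]\geq g(\bar q)$ with $g(q)=qnQ_0\log 2+q\log[q(1-2^{-n})]$. Passing from $\bar q$ to the lower bound $\Delta-\varepsilon$ requires $g$ to be monotonically increasing on $[\Delta-\varepsilon,1]$. Since $g'(q)=nQ_0\log 2+\log[q(1-2^{-n})]+1$, this holds as soon as $q\geq 2^{-nQ_0}/(e(1-2^{-n}))$, which is true for $q\geq\Delta-\varepsilon$. It also requires $\Delta-\varepsilon(\lambda)>0$. Both conditions are where ``sufficiently large $\lambda$'' enters: $\varepsilon(\lambda)=2^{2(\log_2\lambda)\lambda^{c'}-\lfloor\zeta n\rfloor}\to0$ because $c'<1$ and $n=2^{\lfloor\log_2\lambda\rfloor}-1=\Theta(\lambda)$. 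The resulting bound is $g(\Delta-\varepsilon)$ with $Q_0\geq 2^{\lambda^c}$.

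To finish, I use $H[\mathbf X,\mathbf Y]\geq H[\mathbf Y]\geq H[\mathbf Y|k]$, which holds because conditioning does not increase entropy. I then substitute $n=2^{\lfloor\log_2\lambda\rfloor}-1$ and the explicit $L\leq\lambda^{2\lambda^{c'}}$ from Sec.~\ref{sec:details_list_recoverability} to obtain Eqs.~\eqref{eq:explicit_entropy_schematic_sm} and~\eqref{eq:eps(lambda)_Explicit}. For the asymptotic form, $\Delta-\varepsilon\geq\Delta/2$ eventually, and the logarithmic correction is $O(1)$, so the leading term $(\Delta/2)\,n\,2^{\lambda^c}\log 2$ is $\Theta(\lambda2^{\lambda^c})$.

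The main obstacle is bookkeeping rather than new ideas. Soundness is stated with query counts and the entropy lemmas with nonredundant queries, so the threshold must be matched with the correct integer rounding. The monotonicity of $g$ must also be verified so that the lower bound $\Delta-\varepsilon$ can replace $\bar q$.
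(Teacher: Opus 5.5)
Your proposal is correct and follows the paper's proof essentially step by step. Both use the chain $H[\mathbf X,\mathbf Y]\geq H[\mathbf Y]\geq H[\mathbf Y|k]$, the truncation-based averaged minimal-query lemma, the Jensen-based averaged entropy lemma, monotonicity of $g$ for large $\lambda$, and finally substitution of $n$ and $\varepsilon(\lambda)$. Your explicit derandomization into the key and the choice $Q_0=\lceil 2^{\lambda^c}\rceil$ are harmless refinements of points the paper leaves implicit, since it takes $Q_0=2^{\lambda^c}$ directly.
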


\begin{proof}
We first reduce the entropy of the adversary's full classical memory to the entropy of the oracle-output transcript.
Let
\begin{align}
    \Pr[\mathbf y_1,\mathbf y_2,\ldots]
    =
    \sum_{\mathbf x_1,\mathbf x_2,\ldots}
    \Pr[\mathbf x_1,\mathbf x_2,\ldots;\mathbf y_1,\mathbf y_2,\ldots]
\end{align}
be the marginal distribution of the output transcript.
Since marginalization cannot increase Shannon entropy,
\begin{align}
    H[\mathbf X,\mathbf Y]\geq H[\mathbf Y].
\end{align}
Moreover, conditioning on the key cannot increase entropy, so
\begin{align}
\label{eq:HY_sum_keys}
    H[\mathbf Y]\geq H[\mathbf Y|k]
    =
    \sum_k\Pr[k]H_k[\mathbf Y],
\end{align}
where
\begin{align}
    H_k[\mathbf Y]
    =
    -\sum_{\mathbf y_1,\mathbf y_2,\ldots}
    \Pr[\mathbf y_1,\mathbf y_2,\ldots|k]
    \log\!\left[
    \Pr[\mathbf y_1,\mathbf y_2,\ldots|k]
    \right].
\end{align}

We next use the averaged soundness of the Yamakawa--Zhandry construction.
For the folded Reed--Solomon parameters recalled in Sec.~\ref{sec:details_list_recoverability}, the protocol satisfies $(Q_0,\varepsilon(\lambda))$-soundness in the CROM with
\begin{align}
    Q_0=2^{\lambda^c}
\end{align}
and $\varepsilon(\lambda)$ given by Eq.~\eqref{eq:eps(lambda)_Explicit}.
By Lemma~\ref{lemma:averaged_minimal_queries_bound}, the success condition~\eqref{eq:case_weak} implies
\begin{align}
\label{eq:qbar_explicit}
    \Pr_{H,k}[Q^{\rm S}\geq Q_0]
    \geq
    \Delta-\varepsilon(\lambda).
\end{align}
Set
\begin{align}
    \bar q:=\Pr_{H,k}[Q^{\rm S}\geq Q_0].
\end{align}
Applying Lemma~\ref{lemma:averaged_entropy_lower_bound} gives
\begin{align}
\label{eq:HYk_avg_lower}
    H[\mathbf Y|k]
    \geq
    \bar q\, nQ_0\log[2]
    +
    \bar q\log[\bar q(1-2^{-n})].
\end{align}
For sufficiently large $\lambda$, the function
\begin{align}
    q\mapsto
    q nQ_0\log[2]
    +
    q\log[q(1-2^{-n})]
\end{align}
is monotonically increasing for all $q\geq \Delta-\varepsilon(\lambda)$.
Using Eq.~\eqref{eq:qbar_explicit}, we therefore obtain
\begin{align}
    H[\mathbf Y|k]
    \geq
    (\Delta-\varepsilon(\lambda))
    \left(
    nQ_0\log[2]
    +
    \log[
    (\Delta-\varepsilon(\lambda))(1-2^{-n})
    ]
    \right).
\end{align}
Combining this with Eq.~\eqref{eq:HY_sum_keys} and $H[\mathbf X,\mathbf Y]\geq H[\mathbf Y]$ yields
\begin{align}
    H[\mathbf X,\mathbf Y]
    \geq
    (\Delta-\varepsilon(\lambda))
    \left(
    nQ_0\log[2]
    +
    \log[
    (\Delta-\varepsilon(\lambda))(1-2^{-n})
    ]
    \right).
\end{align}
Finally, substituting
\begin{align}
    n=2^{\lfloor\log_2\lambda\rfloor}-1,
    \qquad
    Q_0=2^{\lambda^c}
\end{align}
gives Eq.~\eqref{eq:explicit_entropy_schematic_sm}.
Since $n=\Theta(\lambda)$, $Q_0=2^{\lambda^c}$, $\Delta>0$ is constant, and $\varepsilon(\lambda)=2^{-\Omega(\lambda)}$, the leading term scales as
\begin{align}
    H[\mathbf X,\mathbf Y]
    \geq
    \Theta(\lambda 2^{\lambda^c}).
\end{align}
This proves the theorem.
\end{proof}

\section{\label{SM:chernoff_proof}Chernoff bound for the experimental realization}

Here we provide the concentration bound used in the experimental protocol.
We assume that the verifier and the adversary repeat the experiment for $M$ rounds, each time with an independently chosen oracle instance and, possibly, a fresh key $k$.
The verifier accepts the success-rate condition if the adversary provides successful proofs in at least a fraction $\Delta+\eta$ of the rounds, where $\eta>0$ is a fixed margin.
Equivalently, if $X$ denotes the total number of successful rounds, the verifier requires
\begin{align}
    X\geq (\Delta+\eta)M .
\end{align}
If an integer threshold is needed, one may replace $(\Delta+\eta)M$ by $\lceil(\Delta+\eta)M\rceil$.

The purpose of the following calculation is to show that, if the true success probability of the adversary is smaller than $\Delta$, then the probability of passing this empirical success-rate test is exponentially small in $M$.
Informally, we want to bound
\begin{align}
    \Pr\!\left[
    X\geq(\Delta+\eta)M
    \,\middle|\,
    p_{\rm succ}<\Delta
    \right].
\end{align}

In a simplified model where the oracle instances used in different rounds are independent, each round can be modeled as an independent Bernoulli trial.
Let $X_1,\ldots,X_M$ be independent Bernoulli random variables with success probability $p$.
Thus $X_i=1$ if the $i$th round is successful and $X_i=0$ otherwise, with
\begin{align}
    \Pr[X_i=1]=p.
\end{align}
Let
\begin{align}
    X=\sum_{i=1}^M X_i
\end{align}
be the total number of successful rounds.

Fix $p<\Delta$.
For any $t>0$, Markov's inequality gives
\begin{align}
    \Pr[X\geq(\Delta+\eta)M]
    &=
    \Pr[e^{tX}\geq e^{t(\Delta+\eta)M}]\\
    &\leq
    e^{-t(\Delta+\eta)M}\operatorname{E}[e^{tX}].
\end{align}
The moment-generating function is
\begin{align}
    \operatorname{E}[e^{tX}]
    &=
    \prod_{i=1}^M \operatorname{E}[e^{tX_i}]\\
    &=
    (pe^t+1-p)^M\\
    &=
    \exp\!\left[M\log[pe^t+1-p]\right].
\end{align}
Therefore,
\begin{align}
    \Pr[X\geq(\Delta+\eta)M]
    \leq
    \inf_{t>0}
    \exp\!\left[
    M\log[pe^t+1-p]
    -t(\Delta+\eta)M
    \right].
\end{align}
Evaluating the infimum gives the standard Chernoff bound
\begin{align}
    \Pr[X\geq(\Delta+\eta)M]
    \leq
    \exp\!\left[
    -M D(\Delta+\eta\|p)
    \right],
\end{align}
where
\begin{align}
    D(a\|b)
    =
    a\log\!\left[\frac{a}{b}\right]
    +
    (1-a)\log\!\left[\frac{1-a}{1-b}\right]
\end{align}
is the binary Kullback--Leibler divergence.
Since $p<\Delta<\Delta+\eta$, the divergence $D(\Delta+\eta\|p)$ is minimized over $p<\Delta$ at $p=\Delta$.
Hence,
\begin{align}
    \Pr\!\left[
    X\geq(\Delta+\eta)M
    \,\middle|\,
    p_{\rm succ}<\Delta
    \right]
    \leq
    \exp\!\left[
    -M D(\Delta+\eta\|\Delta)
    \right].
\end{align}

\end{appendices}

\end{document}